\newcolumntype{C}[1]{>{\centering\let\newline\\\arraybackslash\hspace{0pt}}m{#1}}
\newtheorem*{theorem}{$\mathbf{Theorem}$}
\begin{document}

\title{Rigidity control of general origami structures}
 
\author{Rongxuan Li}
\affiliation{Department of Mathematics, The Chinese University of Hong Kong}
\author{Gary P. T. Choi}
\thanks{To whom correspondence may be addressed. Email: ptchoi@cuhk.edu.hk.}
\affiliation{Department of Mathematics, The Chinese University of Hong Kong}

\begin{abstract}
Origami, the traditional paper-folding art, has inspired the modern design of numerous flexible structures in science and engineering. In particular, origami structures with different physical properties have been studied and utilized for various applications. More recently, several deterministic and stochastic approaches have been developed for controlling the rigidity or softness of the Miura-ori structures. However, the rigidity control of other origami structures is much less understood. In this work, we study the rigidity control of general origami structures via enforcing or relaxing the planarity condition of their polygonal facets. Specifically, by performing numerical simulations on a large variety of origami structures with different facet selection rules, we systematically analyze how the geometry and topology of different origami structures affect their degrees of freedom (DOF). We also propose a hypergeometric model based on the selection process to derive theoretical bounds for the probabilistic properties of the rigidity change, which allows us to identify key origami structural variables that theoretically govern the DOF evolution and thereby the critical rigidity percolation transition in general origami structures. Moreover, we develop a simple unified model that describes the relationship between the critical percolation density, the origami facet geometry, and the facet selection rules, which enables efficient prediction of the critical transition density for high-resolution origami structures. Altogether, our work highlights the intricate similarities and differences in the rigidity control of general origami structures, shedding light on the design of flexible mechanical metamaterials for practical applications.

\end{abstract}

\maketitle

\section{Introduction}
Origami (paper folding) has a long history in various cultures~\cite{hatori2011history} and was commonly used for ceremonial and recreational purposes. Over the past several decades, it has become increasingly popular among not just artists but also scientists and engineers, and numerous efforts have been devoted to the creation and analysis of different origami structures~\cite{huffman1976curvature,hull1994mathematics,lang1996computational,kawasaki2005roses,demaine2007geometric,tachi2009origamizing,lang2012origami} as well as their applications to the design of soft robots~\cite{rus2018design,ze2022soft}, logic gates~\cite{treml2018origami,meng2021bistability}, and aerospace structures~\cite{nishiyama2012miura}.

\begin{figure}[t!]
    \centering
    \includegraphics[width=0.97\linewidth]{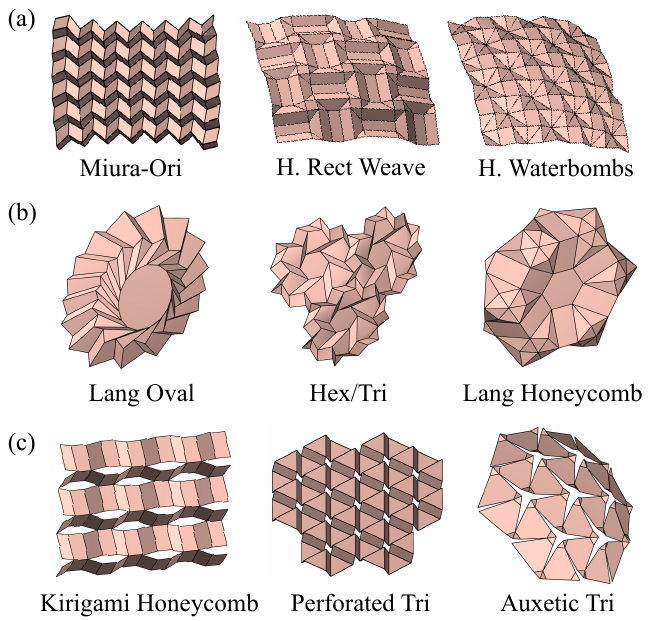}
    \caption{\textbf{Different classes of origami structures considered in our study.} (a)~Periodic origami structures including the Miura-ori, the Huffman Rectangular Weave, and the Huffman Waterbombs. (b)~Rotational origami structures including the Lang Oval, Hex/Tri, and Lang Honeycomb. (c)~Perforated origami structures including the Kirigami Honeycomb, Perforated Triangle, and Auxetic Triangle.}
    \label{fig:manyorigami}
\end{figure}

Miura-ori structures~\cite{miura1985method}, as a prime example of origami structures with widespread applications in science and engineering, have been extensively studied. In particular, several prior works have explored their mechanical properties~\cite{wei2013geometric,schenk2013geometry} and geometric design~\cite{dudte2016programming,dudte2021additive}. In a recent work~\cite{chen2019rigidity}, Chen and Mahadevan studied the stochastic control of the rigidity of Miura-ori structures. More recently, Li and Choi~\cite{li2025explosive} studied the rigidity percolation transition in floppy Miura-ori structures using the idea of explosive percolation~\cite{achlioptas2009explosive,araujo2010explosive,riordan2011explosive}. Besides Miura-ori structures, the rigidity of some other origami structures has also been analyzed in recent studies~\cite{chen2018branches,he2019rigid,he2020rigid,he2022rigid,zhang2023rigidity}. However, the rigidity control of more general origami structures remains less understood. Specifically, how does the rigidity or floppiness of general origami structures change if one enforces or relaxes the planarity property of individual facets? Is it possible to achieve explosive rigidity percolation in general origami structures via carefully designed selection rules? How does the rigidity percolation transition vary with the geometric and topological properties of different origami structures? In this work, we address these questions by performing both experimental and theoretical analyses on various origami structures.

Specifically, here we consider three major classes of origami structures commonly used in practical applications (Fig.~\ref{fig:manyorigami}), namely (a) the \emph{Periodic Origami}, (b) the \emph{Rotational Origami}, and (c) the \emph{Perforated Origami}. In the class of \emph{Periodic Origami} structures, the folds are formed in a periodic and scalable manner. One classical example is the Miura-ori pattern~\cite{miura1985method}, which consists of identical four-coordinated quadrilateral facets. Two other examples are the Huffman Rectangular Weave pattern, consisting of triangular, rectangular, and trapezoidal facets, and the Huffman Waterbombs, consisting of triangular and square facets, both by David Huffman~\cite{davis2013reconstructing}. In the class of \emph{Rotational Origami} structures, the creases form foldable structures with rotational symmetry. Examples include the Lang Oval (by Robert Lang), the Hex/Tri tessellation (by Kendrick Feller), and the Lang Honeycomb (by Robert Lang)~\cite{lang2012origami}. For the \emph{Perforated Origami} structures, the origami folds are designed on a perforated sheet, with examples including the Kirigami Honeycomb, the Perforated Triangle by Johann Kreuter, and the Auxetic Triangle tessellations (see also~\cite{origamisimulator,ghassaei2018fast} and Appendix~\ref{appendix:origami} for more details). For each of these representative patterns in the three classes of origami structures, we study how enforcing or relaxing the planarity of individual facets in the structures can lead to a change in their overall degrees of freedom (DOF). We further analyze and compare the DOF evolution, the critical transitions, and the transition sharpness of different structures to understand their underlying similarities and differences. Moreover, the prediction of critical transitions in complex systems has long been of interest~\cite{scheffer2009early, scheffer2012anticipating, kong2021machine}, and the critical transition density in origami structures also plays a key role in guiding applications in mechanical memory~\cite{treml2018origami,han2023origami}, tunable stiffness and deployable structures~\cite{zang2024kresling}, and adaptive metamaterials~\cite{chen2019rigidity}. Operating near the critical transition density enables programmable transitions between floppy and rigid states, which is essential for reconfigurable design and efficient mechanical information encoding in origami-based systems. We are therefore motivated to derive simple formulas that relate the rigidity percolation critical transition to origami structural parameters and facet selection rules. This provides an effective framework for predicting the critical point of rigidity gain or loss in various origami structures.

\begin{figure}[t]
    \centering
    \includegraphics[width=\linewidth]{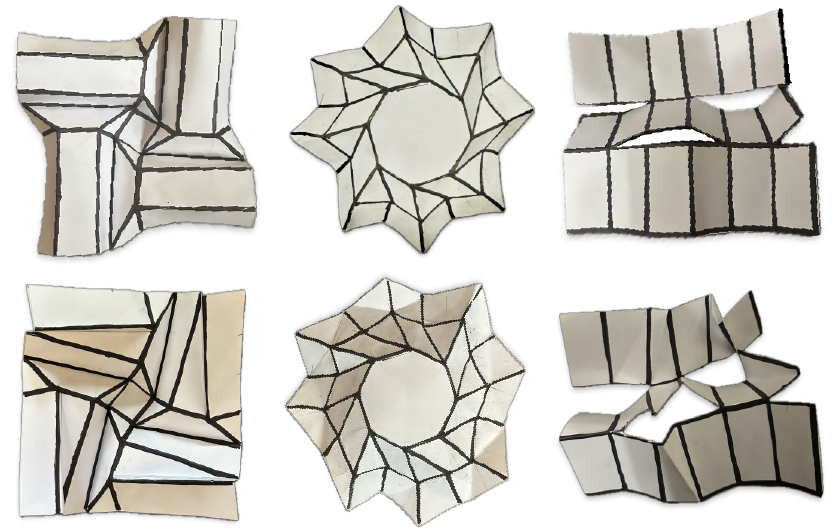}
    \caption{\textbf{Paper-folded origami models with different folding motions.} The top row shows photographs of a folded configuration of the Huffman Rectangular Weave, Lang Oval, and Kirigami Honeycomb structures. The bottom row shows an alternative folding motion of each structure achieved by relaxing the planarity condition of certain facets.} 
    \label{fig:paper_model}
\end{figure}

\section{Methods}
As described in prior works~\cite{chen2019rigidity,li2025explosive}, the Miura-ori structure is highly floppy if we allow all its quadrilateral facets to bend along the facet diagonals, while it is 1-DOF if all quadrilateral facets are enforced to be planar. Also, in between these two maximally floppy and maximally rigid states, one can enforce the planarity of certain facets sequentially based on different rules to control the rigidity transition behaviors of Miura-ori. It is natural to ask whether one can control the rigidity of more general origami structures in a similar manner and whether the transition behaviors depend on the geometry and topology of the origami structures. As demonstrated by the physical paper models in Fig.~\ref{fig:paper_model} (see also Supplementary Videos 1--2), different folding motions can be achieved by relaxing the planarity of certain facets in different origami structures. Therefore, here we consider a general origami structure and start from a maximally floppy initial state in which all of its facets are allowed to bend. We then study how the rigidity of the structure evolves from the initially floppy state to the maximally rigid state under different selection rules for enforcing the facet planarity (Fig.~\ref{fig:rigidity_control_illustration}(a)).

\begin{figure}[t]
    \centering
    \includegraphics[width=\linewidth]{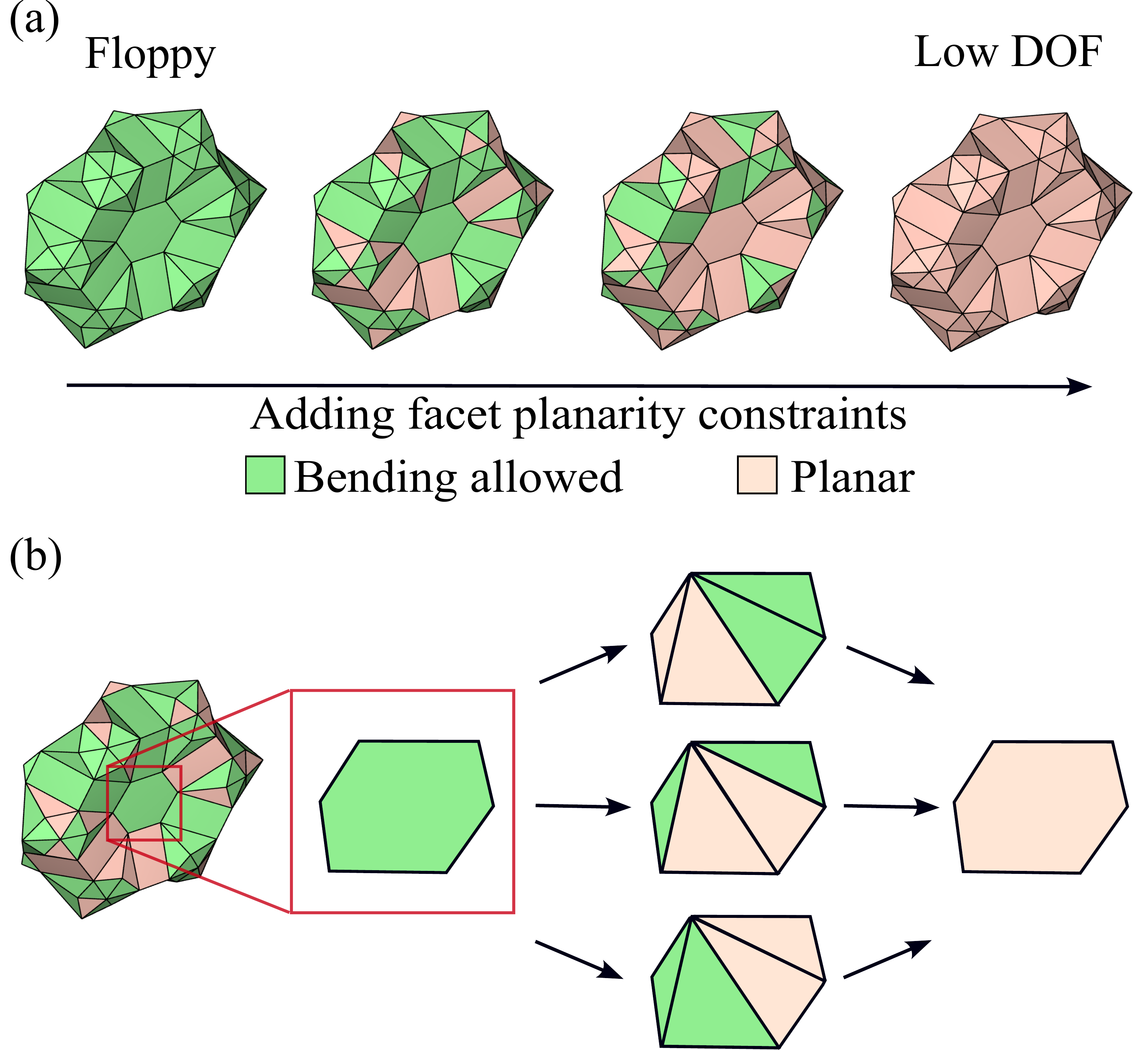}
    \caption{\textbf{An illustration of the rigidity control process.} (a)~Starting from a maximally floppy state of a given origami structure, we consider adding facet planarity constraints based on different selection rules, thereby controlling the rigidity of the structure. (b) To enforce the planarity of a general $n$-sided polygonal facet, we first triangulate it by considering $n-3$ diagonal edges. Then, by enforcing the planarity for every tetrahedron formed by a pair of triangles in the polygonal facet, we can ensure the planarity of the entire facet.}
\label{fig:rigidity_control_illustration}
\end{figure}

\subsection{Assessing the rigidity or softness}
To assess the rigidity or softness of an origami structure, one has to formulate its geometrical constraints and determine the range of its infinitesimal modes of motion. Note that general origami structures may be composed of not only quadrilateral but also triangular, hexagonal, or other polygonal facets. Therefore, here we generalize the approach in~\cite{chen2019rigidity,li2025explosive} to systematically formulate the \emph{edge constraint}, \emph{no-shear constraint}, and \emph{facet planarity constraint} for origami structures with arbitrary polygonal facets.

Specifically, the \emph{edge constraint} for a pair of vertices, corresponding to an edge in the origami structure, enforces that the edge length remains fixed. For each edge $(\mathbf{v}_i, \mathbf{v}_j)$, the constraint is given by 
\begin{equation} \label{eqt:edge_length_constraint}
    g_e = \|\mathbf{v}_i - \mathbf{v}_j\|^2 - l_{ij}^2 = 0,
\end{equation}
where $l_{ij}$ denotes the prescribed length of the edge between vertices $\mathbf{v}_i$ and $\mathbf{v}_j$.

Next, the \emph{no-shear constraint} for every facet in the given origami structure prevents the facet from shearing. For the case of Miura-ori~\cite{chen2019rigidity}, this constraint can be enforced by simply adding a diagonal edge in every quadrilateral facet and imposing an additional edge constraint on it. In our case of general origami structures, we consider triangulating each polygonal facet and imposing edge constraints on all internal edges of the triangulation. More specifically, if we triangulate a $n$-sided polygonal facet ($\mathbf{v}_{1}, \mathbf{v}_{2},\dots, \mathbf{v}_{n}$) (with $n > 3$) by introducing $n-3$ internal edges $(\mathbf{v}_{i_1}, \mathbf{v}_{j_1}), \dots, (\mathbf{v}_{i_{n-3}}, \mathbf{v}_{j_{n-3}})$, the no-shear constraint for this polygonal facet is then given by the edge constraints on these internal edges:   
\begin{equation} \label{eqt:edge_length_constraint_internal}
    g_e = \|\mathbf{v}_{i_p} - \mathbf{v}_{j_p}\|^2 - l_{i_p j_p}^2 = 0,
\end{equation}
where $p = 1,\dots,n-3 $ and $l_{i_p j_p}$ denotes the prescribed length of the internal edge between vertex $\mathbf{v}_{i_p}$ and the reference vertex $\mathbf{v}_{j_p}$. By fixing the internal edge lengths of a triangulated polygonal facet, shearing is uniformly prevented. We remark that there are many valid ways to triangulate a polygonal facet, and the specific choice of the triangulation will not affect the outcome (i.e., as long as the triangulation is valid, the shear is effectively constrained). Also, note that for a triangular facet (i.e., $n = 3$), no no-shear constraints are needed. For an origami structure containing $t_{n-3}$ polygonal facets with $n$ vertices, a total of $ (n-3) \cdot t_{n-3}$ internal edge constraints (as no-shear constraints) are added to prevent shear.

We then consider the \emph{facet planarity constraint}, which prevents a facet in the origami structure from bending. As proposed in~\cite{chen2019rigidity}, the planarity constraint of a quadrilateral facet can be imposed by enforcing the volume of the tetrahedron formed by its four vertices to be 0. More specifically, for a quadrilateral facet formed by four vertices $\mathbf{v}_1, \mathbf{v}_2, \mathbf{v}_3, \mathbf{v}_4$, the facet planarity constraint can be formulated using a scalar triple product as follows: 
\begin{equation}
\label{eqt:quad_planarity_scalar}
    g_p = \left[(\mathbf{v}_{2} - \mathbf{v}_{1}) \times (\mathbf{v}_{4} - \mathbf{v}_{1})\right] \cdot (\mathbf{v}_{3} - \mathbf{v}_{1}) = 0.
\end{equation}
Equivalently, the two triangles $(\mathbf{v}_1, \mathbf{v}_2, \mathbf{v}_3)$ and $(\mathbf{v}_1, \mathbf{v}_3, \mathbf{v}_4)$ will lie on the same plane. For our case of general origami structures, consider a polygonal facet with $n$ vertices $\mathbf{v}_1, \mathbf{v}_2, \dots, \mathbf{v}_n$. Now, note that Eq.~\eqref{eqt:quad_planarity_scalar} can serve as a ``sub-planarity'' constraint that enforces the planarity of the first four vertices $\mathbf{v}_1, \mathbf{v}_2, \mathbf{v}_3, \mathbf{v}_4$ but not necessarily the entire polygonal facet. To fully enforce the planarity of the $n$-sided polygonal facet, we need to introduce $n - 3$ sub-planarity constraints on its $n - 2$ triangulated sub-facets, such that each sub-constraint enforces the planarity of the tetrahedron formed by a pair of adjacent triangles within the polygonal facet. Together, these sub-constraints span the entire polygonal facet and ensure its overall planarity (see Fig.~\ref{fig:rigidity_control_illustration}(b) for an illustration). More mathematically, we show that $n - 3$ is the necessary number of constraints required to enforce the planarity for an $n$-sided polygonal facet.

\begin{theorem}\label{thm:necessary_number}
The number of sub-planarity constraints required to control the planarity of an $n$-sided polygonal facet is exactly $n - 3$. 
\end{theorem}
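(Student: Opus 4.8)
The plan is to prove ``at least $n-3$ constraints are needed'' and ``$n-3$ constraints suffice'' separately; together these give the exact count. For the lower bound, I would view the $n$ vertices $\mathbf{v}_1,\dots,\mathbf{v}_n$ as a point of the configuration space $\mathbb{R}^{3n}$ and consider the locus $\mathcal{P}\subset\mathbb{R}^{3n}$ of planar configurations. A generic element of $\mathcal{P}$ is obtained by choosing an affine plane in $\mathbb{R}^3$ (a $3$-parameter family) and then placing each vertex on it ($2$ parameters each), and since for $n\ge 3$ a non-collinear planar configuration determines its plane uniquely, $\dim\mathcal{P}=2n+3$, so $\mathcal{P}$ has codimension $(3n)-(2n+3)=n-3$ near such a point. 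Because each sub-planarity constraint is a single scalar equation, imposing one more constraint lowers the dimension of the admissible set by at most one; hence cutting $\mathcal{P}$ out of $\mathbb{R}^{3n}$ needs at least $n-3$ constraints, and no choice of fewer tetrahedron-volume constraints can enforce planarity. (For $n=3$ this gives $0$, consistent with a triangle being automatically planar.)

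For the upper bound I would exhibit $n-3$ working constraints. Fix any triangulation of the $n$-gon by $n-3$ non-crossing diagonals, producing $n-2$ triangular sub-facets; its dual graph---one node per triangle, one edge per shared diagonal---is a tree with exactly $n-3$ edges. For each dual edge, impose that the four vertices of the two triangles meeting along that diagonal be coplanar, i.e.\ one scalar triple-product constraint of the form \eqref{eqt:quad_planarity_scalar} per edge, $n-3$ in total. These force global planarity by propagation along the tree: root it at a triangle $T_0$ with affinely independent vertices, let $\Pi$ be their plane, and process the tree outward; when a processed triangle $T$ with all vertices on $\Pi$ is adjacent to an unprocessed $T'$, the two vertices shared by $T$ and $T'$ lie on $\Pi$ and, together with a third vertex of $T$, span $\Pi$, so the coplanarity of the four vertices of $T\cup T'$ forces the remaining vertex of $T'$ onto $\Pi$. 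By induction every vertex lies on $\Pi$, so the facet is planar; combined with the lower bound, the required number is exactly $n-3$.

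The step I expect to be the main obstacle is the treatment of degenerate configurations: if some triangle met in the traversal has collinear vertices, then ``its plane'' is not defined and the propagation can stall. I would resolve this by restricting attention to configurations arising from genuine (non-degenerate) facets, and more precisely by verifying that the $n-3$ triple-product constraints are algebraically independent---at a generic planar configuration their Jacobian has full rank $n-3$, since in the propagation order each constraint introduces one new vertex with a nonvanishing gradient component---so the component of their common zero set through that configuration has codimension exactly $n-3$ and hence coincides locally with $\mathcal{P}$. A minor additional remark is triangulation-independence: every valid triangulation of the $n$-gon has $n-2$ triangles and a dual tree with $n-3$ edges, so neither the construction nor the count depends on which triangulation (e.g.\ a fan from $\mathbf{v}_1$) is used.
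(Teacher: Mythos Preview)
Your proof is correct and follows a genuinely different route from the paper's. The paper works within its infinitesimal rigidity matrix framework: it first counts the rank contributed by the already-imposed edge and no-shear constraints on the triangulated facet (namely $2n-3$, obtained as $3$ rows for the first triangle plus $2$ for each of the remaining $n-3$ triangles), observes that full rigidity modulo the six rigid-body motions requires rank $3n-6$, and concludes that exactly $(3n-6)-(2n-3)=n-3$ additional constraint rows are needed; sufficiency is then argued briefly by noting that planarity propagates along adjacent triangles of the triangulation. Your argument instead works directly in the configuration space $\mathbb{R}^{3n}$, computing the codimension of the planar locus $\mathcal{P}$ to get the lower bound, and proving sufficiency by an explicit induction along the dual tree of the triangulation. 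Your approach is more self-contained---it does not rely on the rigidity-matrix scaffolding or on the edge and no-shear constraints being present---and your sufficiency argument is more careful, handling arbitrary triangulations via the dual-tree structure and explicitly flagging the degenerate (collinear-triangle) case, which the paper does not address. Conversely, the paper's rank-counting argument has the advantage of slotting directly into the infinitesimal-DOF framework used throughout the rest of the work, so that $n-3$ appears naturally as the deficit in the rank of $A$.
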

\begin{proof}
See Appendix~\ref{appendix:origami}.
\end{proof}

We can then construct the infinitesimal rigidity matrix $A$ as described in~\cite{guest2006stiffness} to study the possible infinitesimal modes of motion. Suppose the origami structure has $E$ edges and $V$ vertices. Let the structure contain polygonal facets with up to $n$ edges. Denote $N_0$, $N_1$, $N_2$, \ldots, $N_{n-3}$ as the number of triangles, quads, pentagons, up to $n$-sided facets, respectively, and let the total number of facets be 
\begin{equation}
N = N_{0} + N_{1} \cdots +N_{n-3}.
\end{equation}
We impose planarity constraints on these facets, where the number of planarity constraints is $M_0$ for triangles, $M_1$ for quadrilaterals, $M_2$ for pentagons, and up to $M_{n-3}$ for $n$-gons. The total number of planarity constraints is
\begin{equation}
M = M_0 + M_1 + M_2 + \cdots + M_{n-3}.
\end{equation}
Since enforcing planarity on an \( n \)-sided facet requires \( n - 3 \) sub-planarity constraints on its triangulated sub-facets, we define the total number of constraints on the triangulated origami as 
\begin{equation}
K = E + \sum_{i=0}^{n-3} i \cdot (M_i+N_i).    
\end{equation}
The infinitesimal rigidity matrix $ A \in \mathbb{R}^{K \times 3V}$, constructed on the triangulated origami, encodes edge, facet, and planarity constraints. Its rank indicates the number of infinitesimal DOF of the origami structure when specific planarity conditions are imposed on polygonal facets: 
\begin{equation}
    A = \begin{pmatrix}
        \frac{\partial g_1}{\partial x_1} &  \frac{\partial g_1}{\partial y_1}  &  \frac{\partial g_1}{\partial z_1} & \frac{\partial g_1}{\partial x_2} &  \frac{\partial g_1}{\partial y_2}  &  \frac{\partial g_1}{\partial z_2} & \dots & \frac{\partial g_1}{\partial z_{V}}\\
        \frac{\partial g_2}{\partial x_1} &  \frac{\partial g_2}{\partial y_1}  &  \frac{\partial g_2}{\partial z_1} & \frac{\partial g_2}{\partial x_2} &  \frac{\partial g_2}{\partial y_2}  &  \frac{\partial g_2}{\partial z_2} & \dots& \frac{\partial g_2}{\partial z_{V}}\\
        \vdots & \vdots & \vdots & \vdots & \vdots & \vdots & \ddots & \vdots\\
        \frac{\partial g_K}{\partial x_1} &  \frac{\partial g_K}{\partial y_1}  &  \frac{\partial g_K}{\partial z_1} & \frac{\partial g_K}{\partial x_2} &  \frac{\partial g_K}{\partial y_2}  &  \frac{\partial g_K}{\partial z_2} & \dots& \frac{\partial g_K}{\partial z_{V}}
    \end{pmatrix},
\end{equation}
where $g_1, g_2, \dots, g_K$ include all edge constraints$\{g_{e_j}\}_{i=1}^{E}$, all no-shear constraints $\{g_{n_j}\}_{j=1}^{\sum_{i=1}^{n-3} i \cdot N_i}$, all diagonal constraints, and the current set of planarity constraints on triangulated polygonal facets $\{g_{k_j}\}_{k=1}^{\sum_{i=1}^{n-3} i \cdot M_i}$, and $(x_i, y_i, z_i)$ are the coordinates of the vertex $\mathbf{v}_i$, where $i = 1, 2, \cdots, V$.

Now, suppose there is an infinitesimal displacement $\overrightarrow{dv}$ added to all vertex coordinates $\vec{v} = [x_1, y_1, z_1, x_2, y_2, z_2, \dots, x_V, y_V, z_V]^T$. The condition for infinitesimal rigidity is given by 
\begin{equation}
    A \overrightarrow{dv} = 0.
\end{equation}
Thus, the infinitesimal degrees of freedom (DOF) of the origami structure correspond to the dimension of the null space of $A$. Removing the six trivial global rigid motions (three translations and three rotations), the DOF is given by 
\begin{equation}\label{eqt:DOF}
    d = 3E - \text{rank}(A) - 6.
\end{equation}
The infinitesimal rigidity matrix $A$ for the initial maximally floppy structure includes only the $E$ edge constraints and the $\sum_{i=0}^{n-3} i \cdot N_i$ no-shear constraints (i.e., $M = 0$). Therefore, the number of DOF of the initial structure is given by 
\begin{equation}
d_{\text{initial}} = 3V - E - \sum_{i=0}^{n-3} i \cdot N_i.
\end{equation}
The final DOF, denoted $d_{\text{final}}$, depends on the geometry and topology of the origami structure and is non-negative. 

\subsection{Rigidity control and the DOF evolution}
We then study the evolution of DOF from $d_{\text{initial}}$ to $d_{\text{final}}$ as planarity constraints are gradually imposed. More specifically, at each step, we select a new facet, either a triangular facet or a polygonal facet, and explicitly impose its corresponding planarity constraint. Note that the number of planarity constraints depends on the type of polygonal facet selected. If a triangular facet is selected, no planarity constraint needs to be enforced. For a general polygon with $n$ vertices, $n - 3$ planarity constraints will be imposed if the polygon is selected. We can then define the \emph{planarity constraint density} $\rho \in [0, 1]$ as
\begin{equation}
    \rho = \frac{\text{Number of facets selected}}{\text{Total number of facets}},
\end{equation}
and study how the DOF changes as the density $\rho$ increases from 0 to 1.

Moreover, we can follow the idea of explosive percolation~\cite{achlioptas2009explosive,radicchi2009explosive} and consider multiple candidate facets at each step and select one among them based on certain selection rules to control the rigidity percolation transition.
Let $k \geq 1$ be a positive integer. At each step, we sample $k$ facets randomly from the set of all available facets that have not been selected. We then select one among them based on one of the following selection rules:
\begin{itemize}
    \item \emph{Most Efficient selection rule}: Given $k$ randomly sampled candidate facets $f_1, f_2, \dots, f_k$, we temporarily add the facet planarity constraint of each $f_i$ to the current rigidity matrix $A$ to form $A_i$ and compute the resulting DOF $d_i$. Among all $k$ candidate facets, we select the facet that gives the minimum DOF, i.e., the facet $f_c$ with $c = \text{argmin}_i d_i$. If multiple candidates give the minimum DOF, one is selected randomly.
    \item \emph{Least Efficient selection rule}: Analogous to the above rule, for each candidate facet $f_i$, we construct the augmented matrix $A_i$ and compute the DOF $d_i$. We then select the facet that gives the maximum DOF, i.e., $f_c$ with
    $c = \arg\max_i d_i$. If multiple facets give the maximum DOF, we select one among them randomly.
\end{itemize}
We can then study the effect of the power-of-choices strategy for the rigidity control of general origami structures with different geometric and topological properties.

\subsection{Hypergeometric model}
To quantitatively study the relationship between origami geometry and selection rules with the rigidity percolation (critical transition density) of different origami structures, we need to consider both the selection parameters (i.e., the selection rule and the number of choices $k$) and the structural properties of the origami. To identify structural properties that govern the evolution of DOF, here we estimate and bound the theoretical probabilities of the DOF remaining unchanged or decreasing at a given density $\rho$ using a hypergeometric model. In this model, the triangular facet ratio serves as a key parameter, highlighting it as a fundamental structural property that directly influences the evolution of DOF.

Now, denote the triangular facet ratio at constraint density $\rho$ by $t(\rho)$, and let $N(\rho)$ be the total number of available facets for selection. Then, the number of triangular facets is given by $T(\rho) = t(\rho) N(\rho)$. Let $X(\rho)$ be the random variable representing the number of triangular facets among the $k$ candidate facets selected at density $\rho$. The distribution of $X(\rho)$ follows a hypergeometric distribution:
\begin{equation}
P(X(\rho) = n) = \frac{\dbinom{T(\rho)}{n} \dbinom{N(\rho) - T(\rho)}{k - n}}{\dbinom{N(\rho)}{k}}.
\end{equation}

Now, we denote $P_0(\rho)$ as the probability that the number of DOFs remains unchanged when the constraint density increases by one unit step, i.e.,
\begin{equation}
P_0(\rho) = \mathbb{P}\left(d(\rho) = d\left(\rho + \frac{1}{\text{total \# of facets})}\right)\right),
\end{equation}
and denote $P_1(\rho)$ as the probability that the number of DOFs decreases, i.e.,
\begin{equation}
P_1(\rho) = \mathbb{P}\left(d(\rho) > d\left(\rho + \frac{1}{(\text{total \# of facets})}\right)\right).
\end{equation}

Notice that under the Least Efficient selection rule, a sufficient condition for the DOF $d$ to stay the same at $\rho$ is that at least one triangular facet is included among the $k$ candidates at the selection at $\rho$. In this case, it is guaranteed that a facet can be selected without causing a decrease in DOF by the rule. However, this condition is not necessary for the DOF to remain unchanged, as even when no triangular facet is present in the $k$ candidates, some non-triangular redundant facets may exist among the candidates, resulting in $d$ unchanged after the selection.

Now, note that the probability $P_0(\rho)$ that the DOF remains unchanged can be bounded below by the probability that at least one triangular facet exists among the $k$ candidates,
\begin{equation}
P_0(\rho) \geq 1 - P(X(\rho) = 0) \approx 1 - (1 - t(\rho))^k.
\end{equation}
The approximation of the hypergeometric distribution holds when $N \gg k$. Since the DOF is non-increasing throughout the process, we have the total probability split as $P_1(\rho) + P_0(\rho) = 1$, where $P_1(\rho)$ denotes the probability that the DOF decreases. Therefore, we obtain,
\begin{equation}
P_1(\rho) \leq P(X(\rho) = 0) \approx (1 - t(\rho))^k.
\end{equation}

Notice that under the Most Efficient selection rule, a sufficient condition for the DOF $d$ to remain unchanged is that all $k$ selected facets are triangular. However, this condition is not necessary, as even if some non-triangular facets are included in the candidates, they may still be redundant with respect to the previously selected facets and thus may not lead to a DOF decrease after employing the rule.

In this case, the probability $P_0(\rho)$ that the DOF remains unchanged can be bounded below by the probability that all $k$ candidates are triangular facets,
\begin{equation}
P_0(\rho) \geq P(X(\rho) = k) \approx t(\rho)^k.
\end{equation}
Similarly, we obtain,
\begin{equation}
P_1(\rho) \leq 1 - P(X(\rho) = k) \approx 1 - t(\rho)^k.
\end{equation}

The above inequalities indicate that the change in DOF at constraint density $\rho$ is governed by the number of choices $k$ and the triangular facet ratio $t(\rho)$. 

Note that the behavior of the triangular facet ratio $t(\rho)$ is influenced by the initial triangular facet ratio $t = t(0)$, the selection rule $r$, and the number of choices $k$. To isolate structural effects, we record the initial triangular facet ratio $t = t(0)$, which serves as a structural property of the origami and depends solely on the origami type and its resolution. For each origami structure, we record the initial triangular facet ratio $t$, the selection rule $r$, and then perform numerical simulations as described in the following section.

\section{Results}
To study how the selection rules and the number of choices $k$ affect the rigidity control of general origami structures, we performed numerical simulations in \textsc{MATLAB} with the Parallel Computing Toolbox used. The infinitesimal rigidity matrix $A$ was constructed in sparse matrix format. For the DOF calculation, we followed the approach in~\cite{chen2020deterministic}, using the built-in column approximate minimum degree permutation function \texttt{colamd} and the \texttt{qr} decomposition to compute the QR factorization of $A$. The rank of $A$ was then approximated by counting the number of non-zero diagonal entries in the resulting upper triangular matrix $R$. Since the rank approximation of large matrices may be affected by numerical errors, and since all origami structures considered in our work are foldable, we further restrict the computed DOF values to lie within the feasible range $[1, d_{\text{initial}}]$. Also, note that the construction of the rigidity matrix involves the coordinates of the vertices. In our experiments, we considered all types of origami structures in Fig.~\ref{fig:manyorigami} at a folded configuration with a folding percentage of 25\% generated using the method from~\cite{ghassaei2018fast}. In Appendix~\ref{appendix:origami}, we further compare the simulation results with different folding percentages and show that our analyses also hold for other folding percentages.

For each type of origami structure, each choice of $k = 1, 2, 4, 8, 16, 32$, and each selection rule, we performed 100 independent simulations. In each simulation, we start from the maximally floppy state ($\rho = 0$) and select facets based on selection rules until reaching the maximally rigid state ($\rho = 1$). To facilitate comparison across different origami structures, we also consider the normalized DOF $\widetilde{d}$ defined by 
\begin{equation}
    \widetilde{d} = \frac{d-1}{d_{\text{initial}}-1}.
\end{equation}
Here, we remark that the final DOF of different structures $d_{\text{final}}$ may vary and is not always equal to 1. Nevertheless, we normalize the DOF using $d_{\text{initial}} - 1$, instead of $d_{\text{initial}} - d_{\text{final}}$, to allow a more consistent comparison, particularly with Miura-ori. This normalization also helps us observe whether a structure tends toward a single-DOF mechanism or maintains multiple DOFs when all planarity constraints are enforced.

\begin{figure}[t!]
    \centering
    \includegraphics[width=\linewidth]{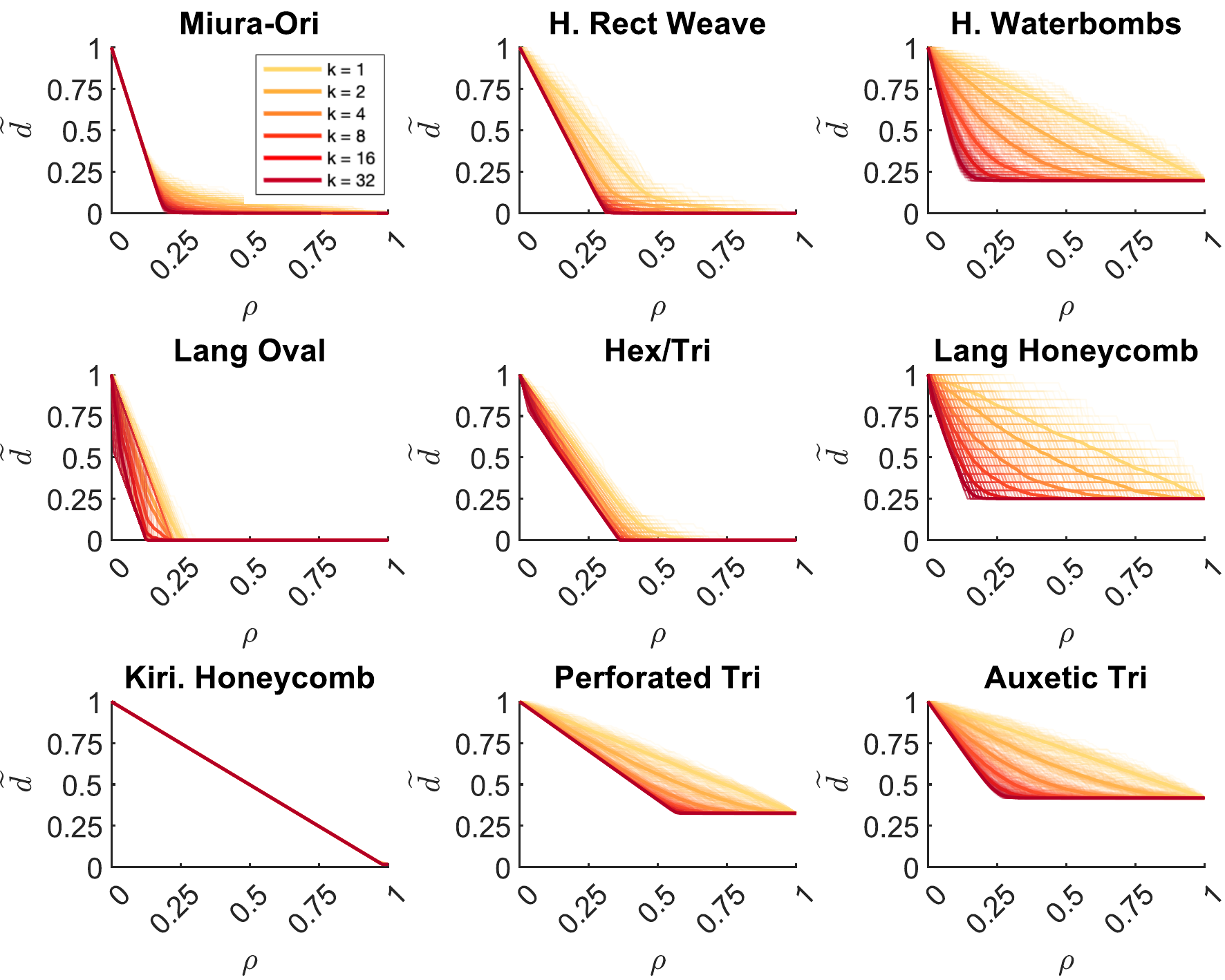}
    \caption{\textbf{Change in the normalized DOF under the Most Efficient selection rule with different numbers of choices for nine types of origami structures.} The three rows correspond to the Periodic Origami, Rotational Origami, and Perforated Origami structures, respectively. Within each row, the ratio of triangular facets increases from left to right across the structures.}
    \label{fig:rule1}
\end{figure}

\subsection{The Most Efficient selection rule}

In Fig.~\ref{fig:rule1}, we plot the values of $\widetilde{d}$ from 100 simulations under the Most Efficient selection rule across different values of $k$ for the 9 origami patterns. The lighter transparent curves represent individual results from each of the 100 simulations for the corresponding $k$ values, while the darker smooth curves show the mean over the 100 simulations for each setup (see Appendix~\ref{appendix:DOF} for more plots with different pattern sizes).


While the Miura-ori structure exhibits a well-defined two-phase evolution of the normalized DOF $\widetilde{d}$, characterized by an initial linear decrease followed by a sharp transition to a sublinear regime across all values of $k$~\cite{chen2019rigidity, li2025explosive}
(see the uniformly linear decline for all $k$ in the Miura-ori results in Fig.~\ref{fig:rule1}), general origami structures show more diverse behaviors. In particular, structures with a high proportion of triangular facets, such as the Huffman Waterbombs, Auxetic Triangle, and Lang Honeycomb, often deviate from this linear trend when $k$ is small (see the yellow regions of the corresponding structures where the DOF shows a scattered decline in Fig.~\ref{fig:rule1}). This difference arises because imposing planarity on triangular facets does not reduce the global DOF, especially during early stages when such facets are inevitably selected from the small $k$-candidate pool. As a result, the early-stage linear decrease observed in Miura-ori becomes less consistent or even absent in these more complex patterns at small $k$. These observations highlight the role of structural heterogeneity and facet geometry in shaping the rigidity percolation process in general origami systems.


Nevertheless, the power of the Most Efficient selection rule, along with larger values of $k$-candidate pool, helps avoid the selection of triangular facets in the early stages, thereby restoring the linear regime even in individual simulations (see the red regions in Fig.~\ref{fig:rule1} where a linear decrease is observed across general structures and where the linearity becomes more pronounced as $k$ increases). The candidate sampling mechanism under high  $k$ offers a larger pool of candidate facets, thereby increasing the probability of selecting facets with more vertices at the early stage. Imposing planarity on these facets is more effective in reducing the degrees of freedom, as it constrains more vertex motions and indirectly influences adjacent facets. As a result, at larger $k$, a linear early-stage decrease in DOF followed by a flatter region similar to the Miura-ori is observed.


Since imposing a planarity constraint on a facet with more vertices tends to reduce the DOF more significantly, the facet type plays a crucial role in the evolution of DOF. In periodic origami structures, due to their inherent periodicity and foldability, the patterns are typically composed of quadrilaterals and triangles. This simple facet composition leads to two distinct behaviors: selecting a quadrilateral facet in the early stage decreases the DOF by one (as it contributes a single constraint row to the rigidity matrix $A$, as discussed in the previous section), while selecting a triangular facet does not reduce the DOF. As a result, under the Most Efficient selection rule with large $k$, the system tends to prioritize quadrilateral facets, leading to a linear decrease in DOF until the final value is reached, after which the curve flattens (see the first-row plots of Fig.~\ref{fig:rule1}).


In rotational origami patterns, central or sub-central facets with many edges and vertices are often present to achieve central rotational symmetry. For example, the Lang Oval pattern features an 18-gon, Hex/Tri features a hexagonal sub-center, and Lang Honeycomb features a central hexagon. To achieve rotational symmetry, these patterns incorporate a mix of triangular, quadrilateral, and other polygonal facets, resulting in more than two types of facets in the structure. Therefore, under the Most Efficient selection rule, combined with the power of choices, the selection process tends to prioritize hexagonal facets over quadrilaterals and quadrilaterals over triangles. As a result, the DOF evolution of rotational origami with a larger $k$ often exhibits a piecewise linear trend, with an initial steep linear decay, followed by a slower linear regime, and finally transitioning into a nonlinear region as the remaining unconstrained facets (mainly triangular facets) will hardly affect the DOF decrease (see the second-row plots of Fig.~\ref{fig:rule1}).


In perforated origami structures, due to the presence of cuts between facets, the individual facet planarity is less related to each other. As a result, imposing a planarity constraint on one facet does not significantly affect the planarity of others, compared to connected origami structures. Hence, the DOF decays linearly as non-triangular facets are increasingly prioritized with larger values of $k$. Once only triangular facets remain, the DOF curve flattens as it reaches the final DOF value (see the third-row plots of Fig.~\ref{fig:rule1}). Notably, in the Kirigami Honeycomb structure, all simulation results for different values of $k$ overlap as a linear line in the plot. This is because the structure is highly floppy and contains no linearity-breaking triangular facets, and enforcing planarity on one facet does not implicitly affect the planarity of neighboring facets at any level, due to the frequent presence of cuts between them. As a result, the DOF decreases linearly throughout the process until the end.


To obtain a more comparative conclusion, we observe that, in general, the linear regime of the DOF with a relatively large $k$ tends to decrease as the ratio of triangular facets increases across different patterns.
Additionally, the standard deviation across simulations decreases as $k$ increases, indicating that the power of choices has a strong effect, leading to a narrower transition width in rigidity percolation.
This effect is particularly significant for patterns with a higher proportion of triangular facets, where the selection rule becomes more influential in avoiding the early selection of facets that do not reduce the structure’s overall DOF.

\begin{figure}[t!]
    \centering
    \includegraphics[width=\linewidth]{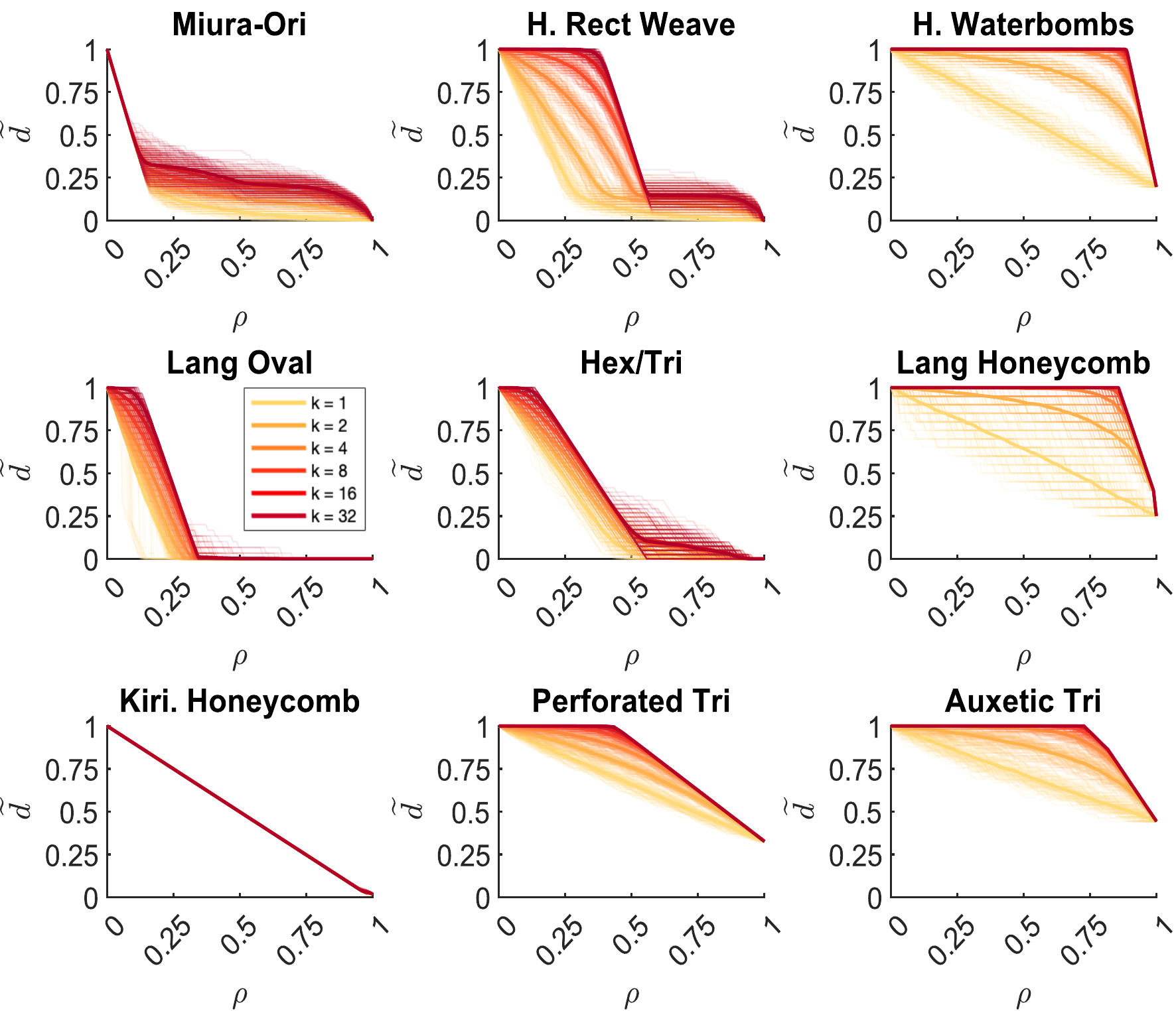}
    \caption{\textbf{Change in the normalized DOF under the Least Efficient selection rule with different numbers of choices for nine types of origami structures.} The three rows correspond to the Periodic Origami, Rotational Origami, and Perforated Origami structures, respectively. Within each row, the ratio of triangular facets increases from left to right across the structures.}
    \label{fig:rule2}
\end{figure}

\subsection{The Least Efficient selection rule}
Similar to the Most Efficient selection rule, Fig.~\ref{fig:rule2} shows $\widetilde{d}$ from 100 simulations under the Least Efficient selection rule across different $k$ values for all nine types of origami patterns. The light lines indicate the individual 100 simulations. The dark lines indicate the average of the 100 simulations (see Appendix~\ref{appendix:DOF} for more plots with different pattern sizes).


It is easy to see that the DOF evolution of general origami structures at the early stage behaves quite differently compared to Miura-ori under the Least Efficient selection rule. Due to the presence of triangular facets, in individual simulations, the rule often selects triangular or small-vertex facets at the early stage, introducing nonlinearity in the DOF decay in the individual simulations. With a larger $k$, the increased sampling pool gives a better chance to select triangular facets in the early stage, which further slows down the DOF reduction. Unlike Miura-ori and Kirigami Honeycomb, which consist only of quadrilateral facets and exhibit inevitable linear DOF decay at early stages under any rule, structures with triangular facets can delay DOF reduction in the early stage. Under the Least Efficient selection rule with large $k$, a slower or even flat DOF transition in the early stage can be observed (see the red regions of the corresponding structures in Fig.~\ref{fig:rule2}, where the DOF remains flat during the early stage).

\begin{figure*}[t!]
    \centering  
    \includegraphics[width=0.9\linewidth]{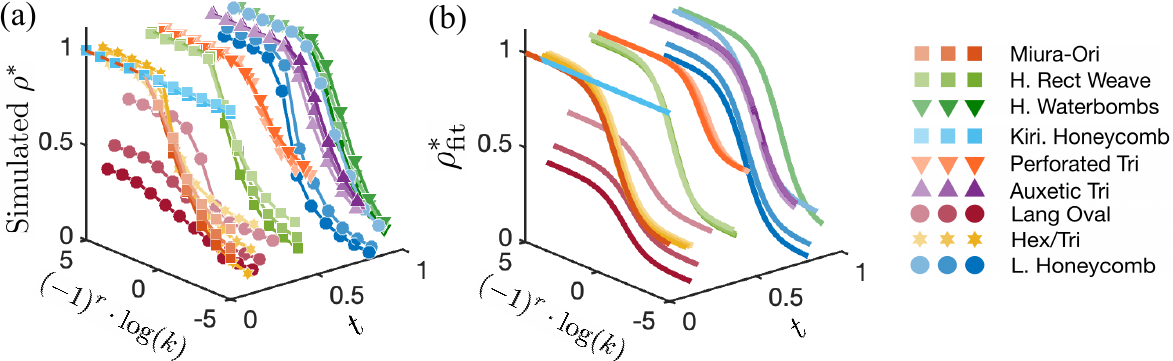}
    \caption{\textbf{The critical transition density for the nine different types of origami structures under different selection rules and different number of choices $k$.} (a) The critical transition density $\rho^{*}$ obtained from our simulations. The nine different types of origami structures are represented using different colors and marker styles. For each type of origami structure, we consider three different resolutions, represented by different marker transparencies. (b) The fitted $\rho^*$ obtained using the proposed model in Eq.~\eqref{eqt:fitting}.}
    \label{fig:critical_rho_3d}
\end{figure*}


In structures where the final DOF is one, such as Huffman Rectangular Weave and Hex/Tri, the DOF trajectory then resembles that of Miura-ori, showing an initial linear decay followed by a nonlinear regime. This can be seen from the DOF evolution of the corresponding structures in Fig.~\ref{fig:rule2}, where the DOF remains flat at the early stage, then decreases linearly, and is eventually followed by a nonlinear region. The middle decreasing and final nonlinear regions resemble the DOF trend observed in Miura-Ori, starting from its initial linear regime. Since the triangular facets have already been selected, the algorithm inevitably begins selecting non-triangular facets, leading to further DOF reduction. Once a sufficient number of polygonal facets are selected, the Least Efficient selection rule begins to preferentially select ``redundant" facets for which the planarity constraints no longer impact the DOF.


For patterns that retain multiple degrees of freedom at the final stage, the structure often contains a large proportion of triangular facets or cuts, or is composed of groups of polygonal facets that include a mix of triangular, quadrilateral, and hexagonal facets. Since the constraint selection process under the Least Efficient selection rule initially targets facets whose planarity constraints have minimal effect on the global DOF, those with fewer vertices are thus more likely to be selected in the early stage, resulting in an initial flat region. As the process progresses, it inevitably shifts toward larger or more connected facets, which constrain more degrees of freedom and lead to a more rapid reduction in DOF. As a result, in the Auxetic Triangle, Perforated Triangle, and Lang Honeycomb patterns, the DOF tends to decrease directly to the final DOF after initial flatness. This can be observed in the DOF evolution of the corresponding structures in Fig.~\ref{fig:rule2}, where the DOF remains flat at the early stage, then decreases linearly to the final DOF, or is followed by a steeper linear decrease leading to the final DOF. By contrast, the Kirigami Honeycomb structure is highly floppy. As explained earlier, under both rules, enforcing planarity on one facet does not implicitly constrain other facets due to the presence of cuts, resulting in no redundant constraints. Under the Least Efficient selection rule, the DOF still decreases linearly throughout the process until the end.


In conclusion, for patterns with triangular facets and relatively large $k$, the DOF curve typically begins with a flat regime, which can not be observed in Miura-ori due to its purely quadrilateral configuration. Then DOF evolution is followed by a (piecewise) linear regime, where the number of linear segments depends on the variety of polygon types in the origami structure. A redundant nonlinear regime may or may not appear, depending on the specific geometry and connectivity of the pattern, and whether it leads to a single-DOF structure when all planarity constraints are imposed.

\subsection{A unified model for the rigidity percolation
transition}

To study the rigidity percolation transition in different origami structures, we consider the probability of getting a minimum-DOF structure (i.e., a structure achieving the minimum possible DOF $d_{\text{final}}$) for each planarity constraint density $\rho$, defined as
\begin{equation}
    P(\rho) = \frac{\text{Number of minimum-DOF structures at $\rho$}}{\text{Total number of simulations}}.
\end{equation}
To quantify how the change in the number of choices $k$ affects the rigidity percolation transition in different patterns and sizes,  we define the \emph{critical transition density} $\rho^*$ as the minimum $\rho$ with the probability of getting a minimum-DOF structure $P \geq 1/2$ in our simulations (see Appendix~\ref{appendix:DOF}
for plots of $P$ vs $\rho$ for different origami structures). For each origami structure, since the triangular facet ratio is a key variable in the rigidity percolation study, we record the triangular facet ratio and its corresponding critical transition density under different selection rules and numbers of choices $k$. Detailed numerical data can be found in Appendix~\ref{appendix:rho}.


In Fig.~\ref{fig:critical_rho_3d}(a), we present a 3D plot of the critical transition density $\rho^*$ against the selection parameter $(-1)^r \cdot \log(k)$, where $r = 1$ corresponds to the Most Efficient selection rule and $r = 2$ to the Least Efficient selection rule. We also include the triangular facet ratio $t$ as a structural parameter in the 3D plot. Note that when $k = 1$, both rules reduce to fully stochastic selection since only one candidate is available. In this case, the selection parameter $(-1)^r \cdot \log(k)$ equals zero, and the critical transition density $\rho^*$ corresponds to the average of the $\rho^*$ values obtained under the two rules. See also Appendix~\ref{appendix:rho} and Supplementary Video 3 for additional results and visualizations.


From the 3D plot, we observe that for rotational origami structures, smaller resolutions tend to exhibit higher critical transition densities, as increasing the resolution changes the component ratio of the polygonal facet types. In contrast, resolution has little effect on periodic and perforated origami structures, since repeating the unit cell does not change the composition and the component ratio of the polygonal facets type. We conclude that if the change of pattern resolution does not significantly affect the facets ratio, then it does not significantly affect the critical transition density.


It is noteworthy that the underlying philosophies of the two rules are fundamentally reversed as $k$ increases on both sides. At $k = 1$, both rules behave as fully stochastic selection strategies. As $k$ increases from this baseline, the Most Efficient selection rule tends to favor selecting multi-vertex facets, while the Least Efficient selection rule increasingly avoids them. To highlight this contrast, data points from both rules are combined into a single plot, as shown in Fig.~\ref{fig:critical_rho_3d}(a).


Moreover, note that the more a facet is avoided under one rule, the more likely it is to be selected under the other. Avoiding certain types of facets is intuitively easier: if a facet does not appear in the $k$-candidate pool (especially when $k$ is small), it is automatically avoided. Even if it appears, the rule can be used to steer selection away from it. Therefore, under the Least Efficient selection rule, both the rule itself and smaller values of $k$ contribute to avoiding undesired facets. As a result, even small values of $k$ can significantly delay the reduction of degrees of freedom (DOF), producing a similar effect as larger $k$.

In contrast, actively favoring a specific facet type is more difficult than simply avoiding it. It first requires the facet to appear in the $k$-candidate pool through stochastic sampling, and then be selected. Thus, to ensure consistent selection of desired facets, $k$ must be sufficiently large to provide enough candidate options. Therefore, while $k$ influences both rules, its effect is more sustained under the Most Efficient selection rule.

The duality and above contrast of two rules explain the rapid change in the critical transition density $\rho^*$ within $k \in [-4, 8]$, with a symmetry center slightly biased toward the Most Efficient selection rule. Beyond this range, $\rho^*$ remains relatively stable across all structures.

Recognizing the symmetric behavior in the simulation results, we observe that $\rho^*$ varies significantly around a central value of $k$, with the center of symmetry slightly shifted to the right. More specifically, all curves appear to be centered around the right of the point $(-1)^r \cdot \log(k) = 0$, i.e., when $k = 1$, and origami structures with different triangular facet ratios $t$ give different center values. Moreover, in general, the center value shows clearly different trends for different $t$. This suggests that $\rho^*$ partially depends on a function of $t$. Additionally, for both the Most Efficient selection rule ($r = 1$) and the Least Efficient selection rule ($r = 2$), the simulated $\rho^*$ values tend to stabilize as $k$ increases. Hence, we fit the simulation results using a $\tanh$-based model with parameters that control both the steepness and the center of symmetry. Specifically, we consider the following model:
\begin{equation} \label{eqt:fitting}
\rho_{\text{fit}}^*(r,k,t) = a \cdot \tanh \left(b \cdot (-1)^r \cdot \log(k)+c \right) + d t + f,
\end{equation}
where $k$ is the number of candidate facets, $r \in \{1, 2\}$ denotes the rule type, $t$ is the triangular facet ratio, and $a$, $b$, $c$, $d$, $f$ are fitting parameters. By fitting this model to each origami structure and resolution, we see that the fitted result $\rho_{\text{fit}}^*$ matches the simulated values $\rho^*$ both qualitatively and quantitatively (see Fig.~\ref{fig:critical_rho_3d}(b) and the detailed results in Appendix~\ref{appendix:rho}). Thus, for any given origami pattern and resolution, we can predict the critical transition density using the corresponding fitted parameters.

The critical transition in rigidity percolation marks the point at which an origami structure becomes mechanically rigid or significantly less flexible, and serves as a key design target for applications such as mechanical memory, tunable stiffness, and reconfigurable metamaterials. For high-resolution structures, direct simulation is computationally expensive. Instead, by computing the triangular facet ratio $t$ of high-resolution origami structures, one can use the fitted model to predict the critical transition density $\rho_{\text{fit}}^*(r, k, t)$ for given values of $r$ and $k$. One can also use the fitted model in a reversed way to achieve a desired critical transition density. Given a high-resolution origami structure with triangular facet ratio $t$, the model allows determining suitable values of the selection rule $r$ and the number of choices $k$ required to reach the target rigidity. In physical rigidity-based origami, if a desired number of rigidified facets is specified as $N_r = \rho_{\text{fit}}^*(r, k, t) \times \text{(total number of facets)}$, one can select appropriate values of the selection rule $r$ and the number of choices $k$ accordingly. By referring to the recorded selection history from rigidity percolation simulations, these constraints can then be directly applied to the physical structure. This approach enables the physical construction of a significantly less flexible origami design that contains the desired number of rigidified facets.

\section{Discussion}
In this work, we have studied the rigidity control of various types of origami structures with different periodicity properties, rotational symmetries, and topologies. In particular, we have considered how different selection rules in changing the facet planarity will affect the rigidity and percolation transitions of different origami structures. We have shown that the changes in the degrees of freedom of periodic origami structures generally exhibit a combination of linear and nonlinear regimes similar to that of the well-known Miura-Ori structures. For rotational origami structures, the central facet plays an important role and can hence lead to a larger variation in the rigidity transition. By contrast, for perforated origami structures, the individual facet planarity conditions are less related to each other due to the cuts, and hence the DOF will generally decrease straightly under the Most Efficient selection rule, until only the ``redundant'' facets are left, and the trend under the Least Efficient selection rule will be the opposite.

More generally, one can see that in all origami structures, the rigidity control is highly relevant to the presence of triangular facets. Moreover, by analyzing the critical transition density $\rho^*$, one can observe the duality between the Most Efficient selection rule and the Least Efficient selection rule. The relationship between the number of choices $k$, the type of the selection rule $r$, the triangular facet ratio $t$, and the critical transition density $\rho^*$ can be described by a simple model involving a hyperbolic tangent function and some other linear terms. Altogether, this paves a new way for the analysis of the rigidity control of general origami structures and other art-inspired mechanical metamaterials.

From a statistical viewpoint, starting with the hypergeometric model, we estimate the probability of DOF change at a given $\rho$ and show that the triangular facet ratio is a key variable. A natural next step is to define additional concepts and introduce necessary assumptions into the stochastic process of random sampling of $k$-candidates and selection. One can then study the expectation $\mathbb{E}[\rho^*]$, variance $\text{Var}[\rho^*]$, where $\rho^* = \arg\min_\rho d(\rho) = d_{\text{final}}$ and further model the simulation process using a stochastic differential equation framework. This provides more theoretical insights into the critical transition and its width. Another natural next step is to extend our study to the rigidity control of curved fold origami structures~\cite{dias2012shape,liu2024design,chai2024programmable,song2024review} and other two- and three-dimensional structural assemblies~\cite{lubbers2019excess,overvelde2017rational,choi2020control}.

\bibliographystyle{ieeetr}
\bibliography{reference}

\clearpage

\centerline{\large\textbf{Supplementary Information}}
\appendix
\renewcommand\thefigure{S\arabic{figure}}    
\setcounter{figure}{0}
\renewcommand\thetable{S\arabic{table}}    
\setcounter{table}{0}


\section{The geometry of different origami structures} \label{appendix:origami}
In the main text, we provided one representative view of all types of origami structures studied in this work. Fig.~\ref{fig:SI_manyori} shows additional views of each type of origami structure to help readers better understand their geometry and structure.

Also, it is noteworthy that the calculation of the infinitesimal rigidity matrix $A$ depends on the vertex coordinates of the origami structure. As discussed in~\cite{li2025explosive,chen2019rigidity}, changes in geometric parameters (i.e., the folding percentage) of the Miura-ori do not affect the rigidity percolation behavior. It is natural to ask whether the geometric parameters of other origami structures influence their rigidity percolation behavior. 

To address this, we present in Fig.~\ref{fig:SI_geometry_recweave}(a) the simulation results for the Huffman Rectangular Weave origami structure with 129 facets, a folding percentage of 25\%, and $k = 1, 2, 4, 8, 16, 32$ under both selection rules. We then increase the folding percentage to 50\% and 75\% while keeping the configuration and resolution unchanged, and the corresponding simulation results (100 simulations for each $k$) are shown in Fig.~\ref{fig:SI_geometry_recweave}(b)–(c). As shown on the left of Fig.~\ref{fig:SI_geometry_recweave}, the geometries with 50\% and 75\% folding percentages differ significantly from the 25\% case considered in the main text. Nevertheless, the 3$\times$4 plots on the right demonstrate that, under both the Most Efficient and Least Efficient selection rules, the simulation results for the evolution of the normalized DOF and rigidity percolation remain highly consistent. In particular, the trend of the normalized DOF evolution and the increasing sharpness in the transition of the probability $P$ of obtaining a minimum-DOF structure are similar across different folding percentages and rules as $k$ increases. A comparable trend is also observed in Fig.~\ref{fig:SI_geometry_waterbomb} for the Huffman Waterbombs structure with different folding percentages (25\%, 50\%, and 75\%), confirming the robustness of the rigidity percolation behavior across different geometric configurations. Therefore, we conclude that the DOF evolution and explosive rigidity percolation transition are independent of the geometry of the general origami structure.

Besides, in the main text, we stated a theorem regarding the number of sub-planarity constraints required to control the planarity of an $n$-sided polygonal facet. Below, we give the detailed proof of the theorem.

\begin{theorem}
The number of sub-planarity constraints required to control the planarity of an $n$-sided polygonal facet is exactly $n - 3$. 
\end{theorem}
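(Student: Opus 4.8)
The plan is to prove the two inequalities separately: that $n-3$ sub-planarity constraints \emph{suffice} to force an $n$-sided facet to be planar, and that fewer than $n-3$ can never \emph{control} its planarity. Throughout I fix a triangulation of the facet $(\mathbf{v}_1,\dots,\mathbf{v}_n)$ into $n-2$ triangles by $n-3$ internal diagonals, as in Fig.~\ref{fig:rigidity_control_illustration}(b), and I recall that a sub-planarity constraint on two triangles sharing a diagonal is exactly the vanishing of the scalar triple product of Eq.~\eqref{eqt:quad_planarity_scalar}, i.e., the statement that the four vertices involved are coplanar. The objects I would track are the realizations of the triangulated facet in $\mathbb{R}^{3n}$ that respect the $n$ boundary edge lengths and the $n-3$ diagonal lengths (these are already enforced by the edge and no-shear constraints); I write $S$ for the corresponding configuration set modulo the $6$-dimensional group of rigid motions. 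Two elementary facts about $S$ are all I need: first, since it is cut out in $\mathbb{R}^{3n}$ by $2n-3$ polynomial equations, the Krull principal ideal theorem gives that every component has codimension at most $2n-3$, so $\dim S \ge 3n-(2n-3)-6 = n-3$ at every point, in particular at the flat configuration; second, the flat realizations of a triangulated polygon are isolated in $S$, since a triangulated polygon is rigid in the plane.

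For sufficiency I would pass to the dual tree of the triangulation, whose $n-2$ nodes are the triangles and whose $n-3$ edges are the shared diagonals; this graph is connected and acyclic. Impose the sub-planarity constraint on each of these $n-3$ adjacent triangle pairs. Then I argue by induction along the tree, using transitivity of coplanarity: if two non-degenerate triangles share an edge and satisfy the coplanarity constraint, they span the same affine plane, so starting from a root triangle and propagating outward through the connected dual tree shows that all $n-2$ triangles, hence all $n$ vertices, lie in one common plane. Thus $n-3$ constraints force planarity, and conversely every planar realization trivially satisfies them, so this set of constraints exactly carves out the planar configurations.

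For necessity I would use a dimension count. Suppose only $m$ sub-planarity constraints are imposed; each is a single polynomial equation on $S$, so by the same Krull bound every component of the resulting solution set has dimension at least $(n-3)-m$ through the flat configuration. If $m < n-3$, this dimension is positive, so there is a positive-dimensional family of constrained configurations passing through the flat one; since the flat realizations are isolated in $S$, this family must contain non-planar configurations, meaning the $m$ constraints do not control planarity. Hence $m \ge n-3$, and combined with sufficiency the required number is exactly $n-3$.

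The main obstacle I anticipate is the precise formulation of ``control the planarity'' and the handling of degenerate configurations: the sufficiency propagation needs the triangles to be non-degenerate, and the necessity count implicitly uses that the flat polygon is a genuine point of $S$ around which the local dimension bound applies. The cleanest way to keep the argument robust is exactly the one above — phrasing necessity as ``fewer than $n-3$ scalar equations cannot cut a set of dimension $\ge n-3$ down to an isolated point'' — which avoids having to prove that the $2n-3$ length constraints or the sub-planarity equations are mutually independent, and only relies on the standard codimension bound together with the classical rigidity of a triangulated polygon in the plane.
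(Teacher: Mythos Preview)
Your proof is correct and shares the paper's overall two-part structure: a dimension count for necessity and a propagation argument along adjacent triangles for sufficiency. The execution, however, is genuinely different. The paper works entirely at the infinitesimal level: it argues directly that the $2n-3$ edge and no-shear rows of the rigidity matrix are linearly independent (by a sparsity-pattern argument, adding triangles one at a time), observes that the maximal attainable rank is $3n-6$, and subtracts to get $n-3$. You instead work with the full nonlinear configuration variety $S$ and invoke Krull's height theorem to obtain $\dim S \ge n-3$ without ever checking that the length constraints are independent, then couple this with the planar rigidity of a triangulated polygon to conclude that fewer than $n-3$ additional equations cannot isolate the flat locus. Your dual-tree formulation of the sufficiency step also makes explicit the transitivity-of-coplanarity argument that the paper only sketches in one sentence. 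The upshot is that the paper's proof is more elementary and self-contained (pure linear algebra), while yours is more robust --- it does not rely on the length constraints being generically independent, and it separates cleanly the geometric input (planar rigidity) from the dimension bound --- at the cost of importing heavier machinery than the statement really requires.
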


\begin{proof}
We first show that the edge and no-shear constraints contribute a rank of $n + (n - 3) = 2n - 3$ to the rigidity matrix. 

Note that the first triangle $(\mathbf{v}_1, \mathbf{v}_2, \mathbf{v}_3)$ in the triangulation of the $n$-sided polygonal facet contributes three edge constraints. These constraints are linearly independent and form a $3 \times 3n$ submatrix with nonzero entries only at columns corresponding to $\mathbf{v}_1, \mathbf{v}_2, \mathbf{v}_3$. Each subsequent triangle $(\mathbf{v}_k, \mathbf{v}_i, \mathbf{v}_j)$ introduces two new edge constraints, which again contribute two independent rows to the rigidity matrix due to their sparsity pattern (i.e., non-overlapping support in the matrix rows). These rows involve only the coordinates of $\mathbf{v}_k$, $\mathbf{v}_i$, and $\mathbf{v}_j$. There are $(n - 3)$ such additional triangles beyond the first one, contributing $2(n - 3)$ additional rows. Therefore, the edge and no-shear constraints together contribute a total rank of 
\begin{equation}
3 + 2(n - 3) = 2n - 3.
\end{equation}

Since the polygon has $n$ vertices in $\mathbb{R}^3$, it has in total $3n$ degrees of freedom. Subtracting 6 for the rigid body motions (3 translations and 3 rotations), the maximal rank attainable is $3n - 6$. Thus, the minimum number of constraints needed to reach full rank is 
\begin{equation}
3n - 6 - (2n - 3) = n - 3.
\end{equation}
Hence, we conclude that the above $(n - 3)$ planarity constraints are necessary to control the planarity of the polygonal facet. 

Moreover, note that it suffices to enforce planarity between each adjacent pair of triangles in the triangulation. Since each new triangle shares an edge with the previous one, enforcing planarity along these $n - 3$ internal edges is sufficient to ensure the entire polygon remains planar. Therefore, we conclude that the $n - 3$ planarity constraints in the above argument constitute one of the minimal sets required to control the polygonal facet. Therefore, no constraint is redundant.

\end{proof}

\section{The degrees of freedom (DOF) in the origami structures with different pattern resolutions} \label{appendix:DOF}

In the main text, we presented the rigidity percolation results for one size of each origami pattern. Here, we extend our analysis by considering multiple resolutions for each type of origami structure. 

Specifically, for the class of Periodic Origami structures (Fig.~\ref{fig:SI_sizes}(a)), in addition to the Miura-Ori with 400 facets shown in the main text, we include views of the Miura-Ori with 100 and 225 facets. For the Huffman Rectangular Weave, we include structures with 129, 313, and 577 facets. For the Huffman Waterbombs, we show structures with 178, 403, and 718 facets. For the class of Rotational Origami structures (Fig.~\ref{fig:SI_sizes}(b)), we show the Lang Oval with 69, 103, and 137 facets; the Hex/Tri with 97, 205, and 1285 facets; and the Lang Honeycomb with 91, 367, and 829 facets. For the class of Perforated Origami structures (Fig.~\ref{fig:SI_sizes}(c)), we show Kirigami Honeycomb with 72, 120, and 276 facets; Perforated Triangle with 39, 106, and 342 facets; and Auxetic Triangle with 88, 206, and 570 facets.

For all types of origami structures at different resolutions, we perform numerical simulations using the same setup as described in the main text. We compute the normalized DOF and the probability $P$ of obtaining a minimum-DOF structure at each planarity constraint density $\rho$. We further present our observations on the normalized DOF and the transition width across different resolutions and values of $k$. While some variations on the same structure across the different resolutions may exist, the overall DOF evolution trend for each structure remains consistent with the corresponding observations presented in the main text. The results provided here may serve as a reference for readers interested in specific structures and may help guide further studies on the rigidity control of particular origami designs, especially in identifying the optimal $k$ to achieve a prescribed low transition width.

We first recall the definition of the transition width. The transition width is defined as the interval of~$\rho$ over which the probability~$P$ increases from 0 to 1. More precisely, it is given by
\begin{equation}
\rho_w = \rho_1 - \rho^0,    
\end{equation}
where $\rho_1$ is the minimum $\rho$ for which $P = 1$, and $\rho^0$ is the maximum $\rho$ for which $P = 0$ in our simulations.

For the DOF evolution under the Most Efficient selection rule of the periodic origami and rotational origami structures (see Fig.~\ref{fig:SIrule1_fig1} and Fig.~\ref{fig:SIrule1_fig2}), increasing the resolution reduces the proportion of facets needed to reach the final DOF. For the DOF evolution of perforated origami structures (see Fig.~\ref{fig:SIrule1_fig3}), increasing the resolution does not affect the proportion of facets needed to reach the final DOF. In all of the above-mentioned origami structures, while the normalized DOF decreases as the resolution increases, the final DOF remains unchanged. The overall trend of DOF evolution with increasing $k$ is consistent across all resolutions. Specifically, as $k$ increases, the DOF in individual simulations tends to drop more rapidly in the early stages, eventually exhibiting a sharp and consistent linear decline across simulations until reaching the final DOF. 

For the transition width under the Most Efficient selection rule of periodic origami structures, increasing $k$ from small values (e.g., $k = 1, 2$) leads to a slight increase of transition width of the probability $P$ from 0 to 1, while further increasing $k$ results in a sharper transition width (see Fig.~\ref{fig:SI_P_rule1_fig1}). This behavior is consistent with observations in the Miura-ori structure. More explanations and quantitative analyses are provided in~\cite{li2025explosive}. Additionally, the transition width generally remains the same across different resolutions for most values of $k$. For rotational origami structures, increasing $k$ from 1 generally results in a sharper transition of the probability $P$ from 0 to 1, especially for structures with higher resolution (see Fig.~\ref{fig:SI_P_rule1_fig2}). The transition width tends to decrease with resolution for most values of $k$. For perforated origami structures, increasing $k$ from small values (e.g., $k = 1, 2$) leads to a slight increase of the transition width of the probability $P$ from 0 to 1, while further increasing $k$ results in a sharper transition width (see Fig.~\ref{fig:SI_P_rule1_fig3}). The transition width generally remains the same across different resolutions for most values of $k$.

For the DOF evolution under the Least Efficient selection rule across general origami structures (see Fig.~\ref{fig:SIrule2_fig1}, Fig.~\ref{fig:SIrule2_fig2}, and Fig.~\ref{fig:SIrule2_fig3}), the selection rule does not affect the underlying structure of the origami. Similar to the Most Efficient selection rule, while the normalized DOF decreases with increasing resolution, the final DOF remains unchanged. The overall trend of DOF evolution with increasing $k$ is consistent across all resolutions. Specifically, as $k$ increases, the DOF in individual simulations tends to remain flat for as long as possible, especially when triangular facets constitute a larger proportion of the structure, and then inevitably experience a linear decline. If the structure converges to a single-DOF configuration, a nonlinear region may appear after the linear regime due to redundancy in the facet planarity constraints left. In contrast, structures with multiple final DOFs are more likely to be reached through a consistent linear decline.

For the transition width under the Least Efficient selection rule in periodic origami structures, increasing $k$ from 1 to 2 leads to a significantly sharper transition (see Fig.~\ref{fig:SI_P_rule2_fig1}). The transition width generally remains unchanged across different resolutions for most values of $k$. For rotational origami structures, the relation between the transition width and $k$ follows a similar trend to the periodic structures. Increasing the resolution results in a slight increase in the transition width for the Lang Honeycomb structure, but a decrease for Lang Oval and Hex/Tri (see Fig.~\ref{fig:SI_P_rule2_fig2}). For perforated origami structures, increasing $k$ from 1 also sharpens the transition, although most of these structures already exhibit a sharp transition even at $k = 1$ (see Fig.~\ref{fig:SI_P_rule2_fig3}). Due to this initial sharpness and the nature of DOF decay under the Least Efficient selection rule, the transition width remains nearly 0 across different resolutions for most values of $k$.\\

\section{The critical transition density for different origami structures}  \label{appendix:rho}

In the main text, we presented a 3D plot of the critical transition density $\rho^*$ for nine types of origami structures, each with three configurations, with triangular facet ratio information provided. Here, we provide the corresponding numerical $\rho^*$ values in Table~\ref{tab:critical_density_1}, Table~\ref{tab:critical_density_2}, and Table~\ref{tab:critical_density_3}. Also, in the main text, we fit the simulated $\rho^*$ using a tanh-based model:
\begin{equation}
\rho_{\text{fit}}^* = a \cdot \tanh \left(b \cdot (-1)^r \cdot \log(k) + c \right) + d t + f.
\end{equation}
For each of the nine types of origami structures, we consider the simulation results for different parameters $(-1)^r \log(k) \in\{- \log 32, - \log 16, - \log 8, - \log 4$, $- \log 2, 0, \log 2, \log 4, \log 8, \log 16, \log 32\}$ at three different resolutions, which gives $11 \times 3 = 33$ data points. We then formulate the curve fitting problem as a constrained optimization problem and use the \texttt{fmincon} solver in MATLAB to search for the best-fit parameters $a,b,c,d,f$ with $0 \leq \rho_{\text{fit}}^* \leq 1$ based on the 33 data points. The visualization of the simulated and fitted results is shown in Fig.~\ref{fig:SI_critical_rho_3d}. To evaluate the fitting accuracy, we consider the Root Mean Squared Error (RMSE) as follows:
\begin{equation}
\mathrm{RMSE} = \sqrt{ \frac{1}{n} \sum_{i=1}^{n} \left( \rho_{\text{fit}_i}^*- \rho^*_i \right)^2 },
\end{equation}
where $n = 33$ is the total number of data points used in the fitting of one origami structure, $\rho^*_i$ is the simulated critical transition density of the $i$-th structure (with $i = 1, 2, \dots, n$), and $\rho_{\text{fit}_i}^*$ is the corresponding fitted result. For each origami structure, the fitted parameters along with the corresponding RMSE are provided in Table~\ref{tab:fit_params}. For all origami structures, the lower RMSE values indicate that the fitted models can effectively match the simulation results.

\section{Video captions}
\textbf{Supplementary Video 1}: A video showing a folding motion of the Huffman Rectangular Weave paper model.\\

\textbf{Supplementary Video 2}: A video showing an alternative folding motion of the Huffman Rectangular Weave paper model with certain facets allowed to bend.\\

\textbf{Supplementary Video 3}: An animation showing the change in the critical transition density $\rho^*$ for different origami structures under different selection rules and number of choices $k$.

\begin{figure*}[t]
    \centering
    \includegraphics[width=\linewidth]{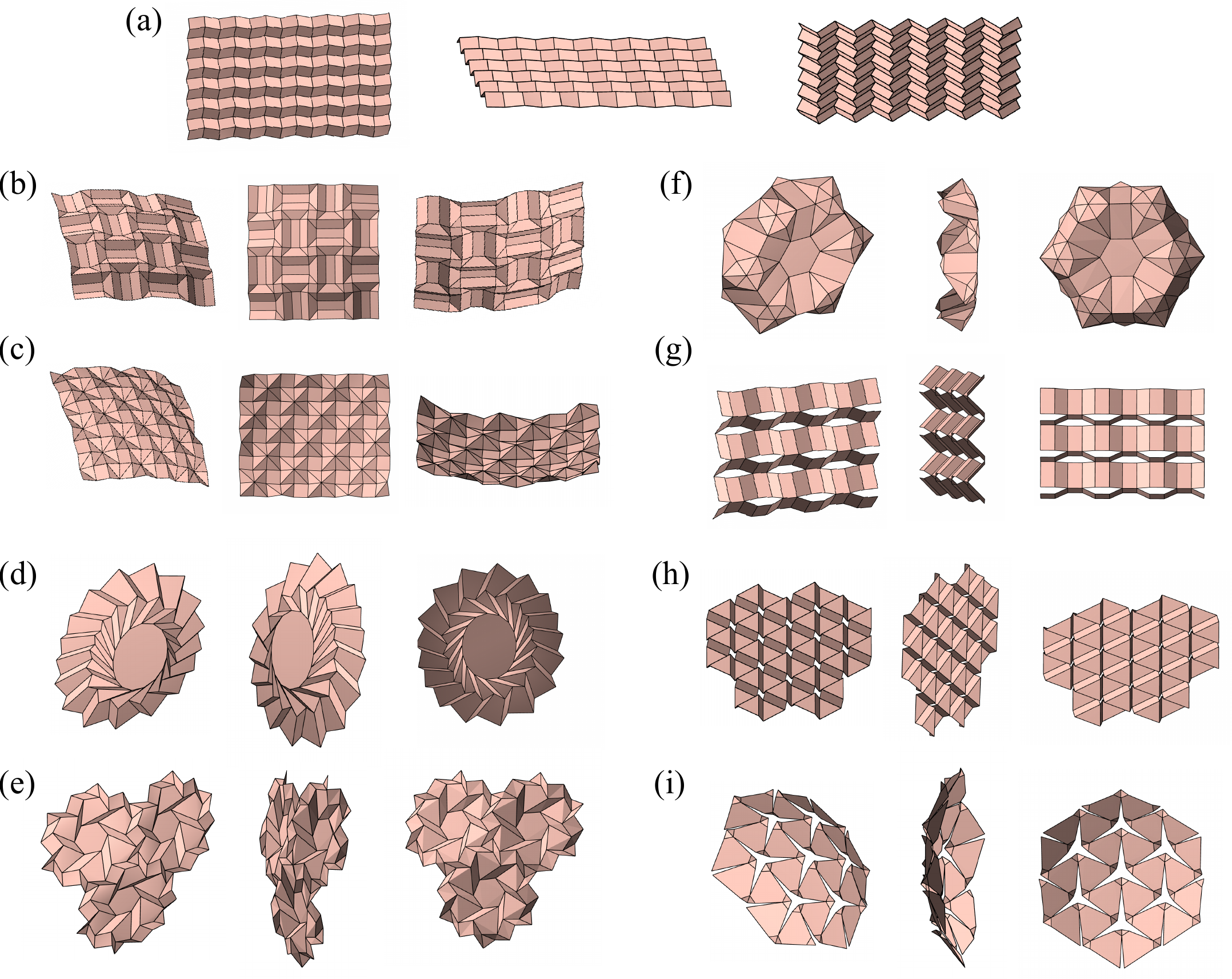}
    \caption{\textbf{Different views of the nine types of origami structures considered in our study.} For each type, three different views are provided. (a)~Miura-Ori. (b)~Huffman Rectangular Weave. (c)~Huffman Waterbombs. (d)~Lang Oval. (e)~Hex/Tri. (f)~Lang Honeycomb. (g)~Kirigami Honeycomb. (h)~Perforated Triangle. (i)~Auxetic Triangle. }
    \label{fig:SI_manyori}
\end{figure*}

\begin{figure*}[t!]
    \centering  
    \includegraphics[width=\linewidth]{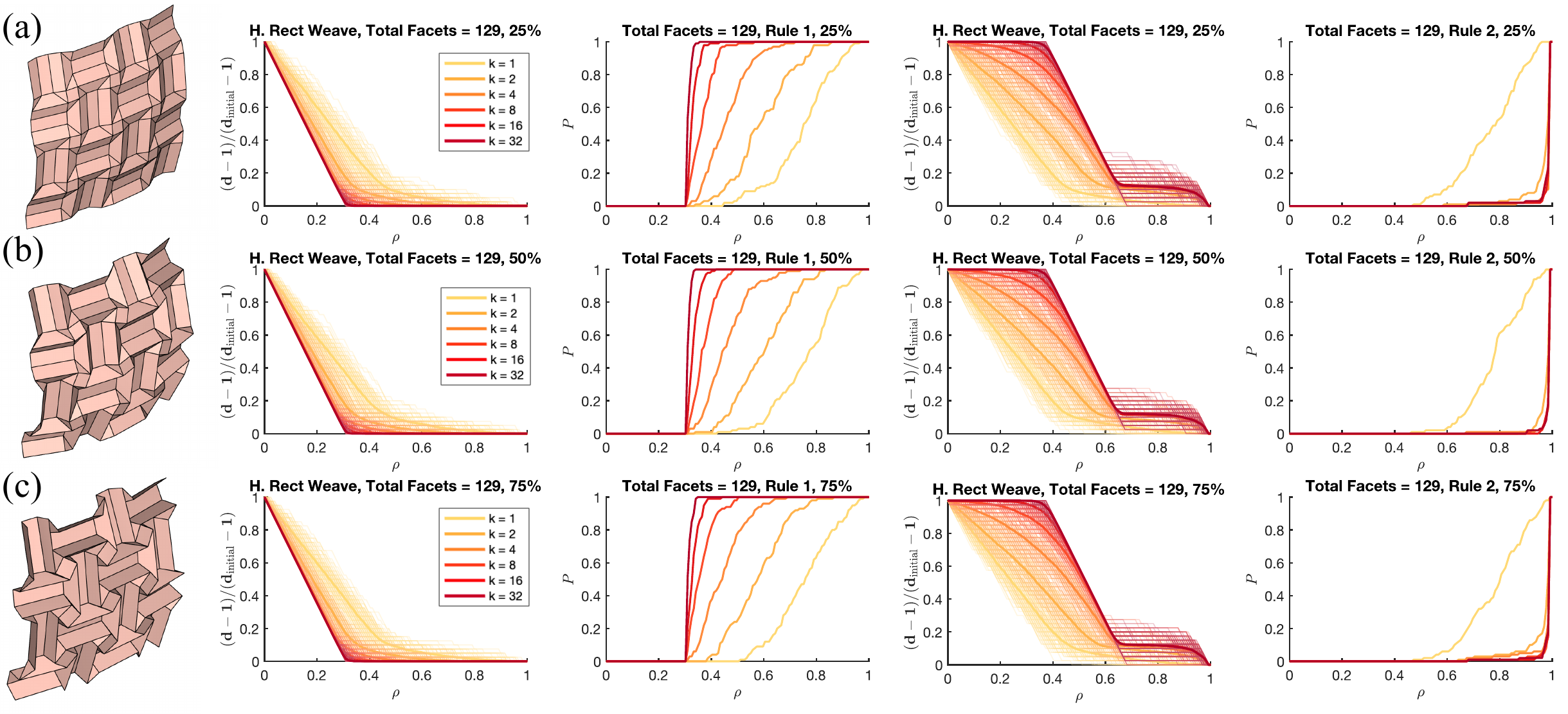}
    \caption{\textbf{Comparing the explosive rigidity percolation in the Huffman Rectangular Weave Origami structure with different folding percentage $\theta$.} (a) The results for  $\theta = 25\%$. (b) The results for $\theta = 50\%$. (c) The results $\theta = 75\%$. For each folding percentage, we consider the Huffman Rectangular Weave Origami structure (left), the rigidity percolation simulation result based on the Most Efficient selection rule with different number of choices $k$, and the simulation result based on the Least Efficient selection rule (right). Here, $\rho$ is the density of the planarity constraints explicitly imposed, and $P$ is the probability of getting a final DOF structure.}
    \label{fig:SI_geometry_recweave}
\end{figure*}

\begin{figure*}[t!]
    \centering  
    \includegraphics[width=\linewidth]{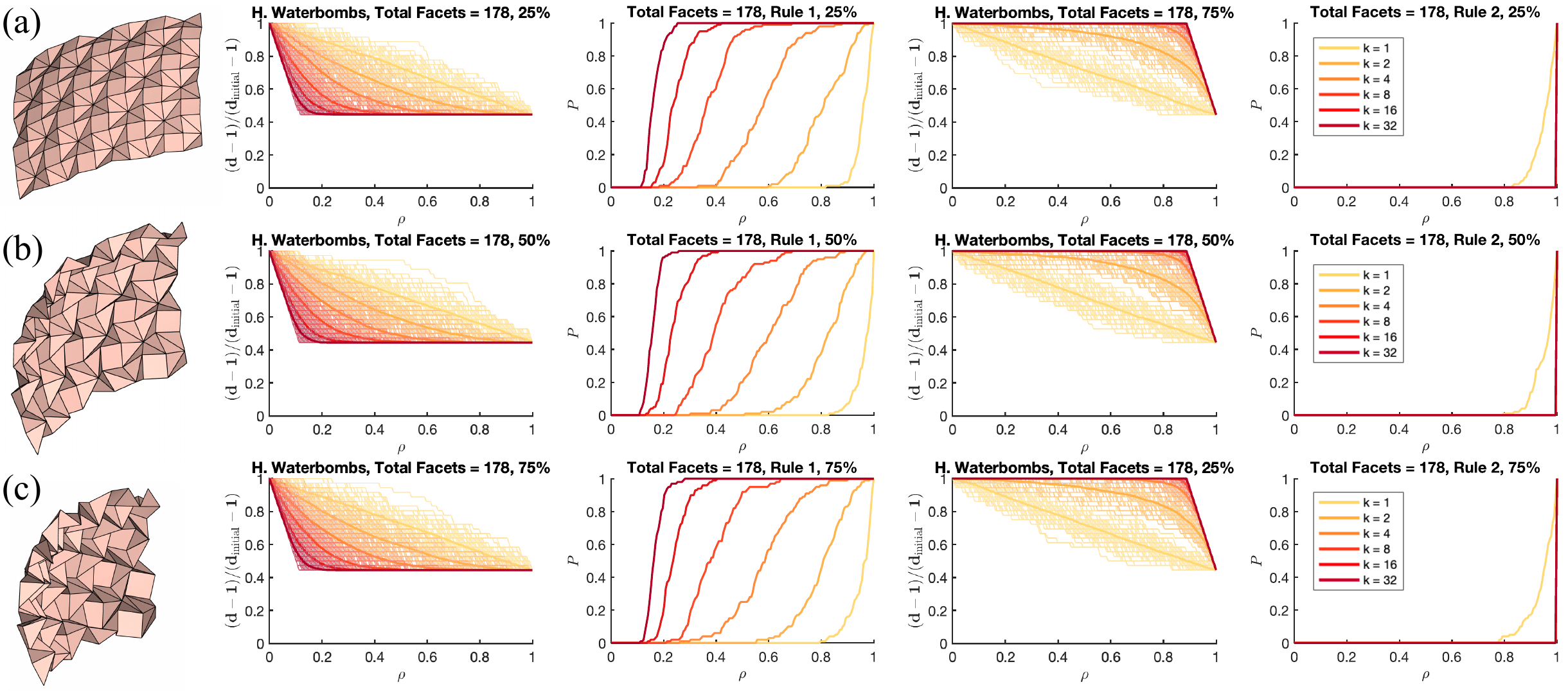}
    \caption{\textbf{Comparing the explosive rigidity percolation in the Huffman Waterbombs Origami structure with different folding percentage $\theta$.} (a) The results for  $\theta = 25\%$. (b) The results for $\theta = 50\%$. (c) The results $\theta = 75\%$. For each folding percentage, we consider the Huffman Waterbombs Origami structure (left), the rigidity percolation simulation result based on the Most Efficient selection rule with different number of choices $k$, and the simulation result based on the Least Efficient selection rule (right). Here, $\rho$ is the density of the planarity constraints explicitly imposed, and $P$ is the probability of getting a final DOF structure.}
    \label{fig:SI_geometry_waterbomb}
\end{figure*}

\begin{figure*}[t!]
    \centering  
    \includegraphics[width=\linewidth]{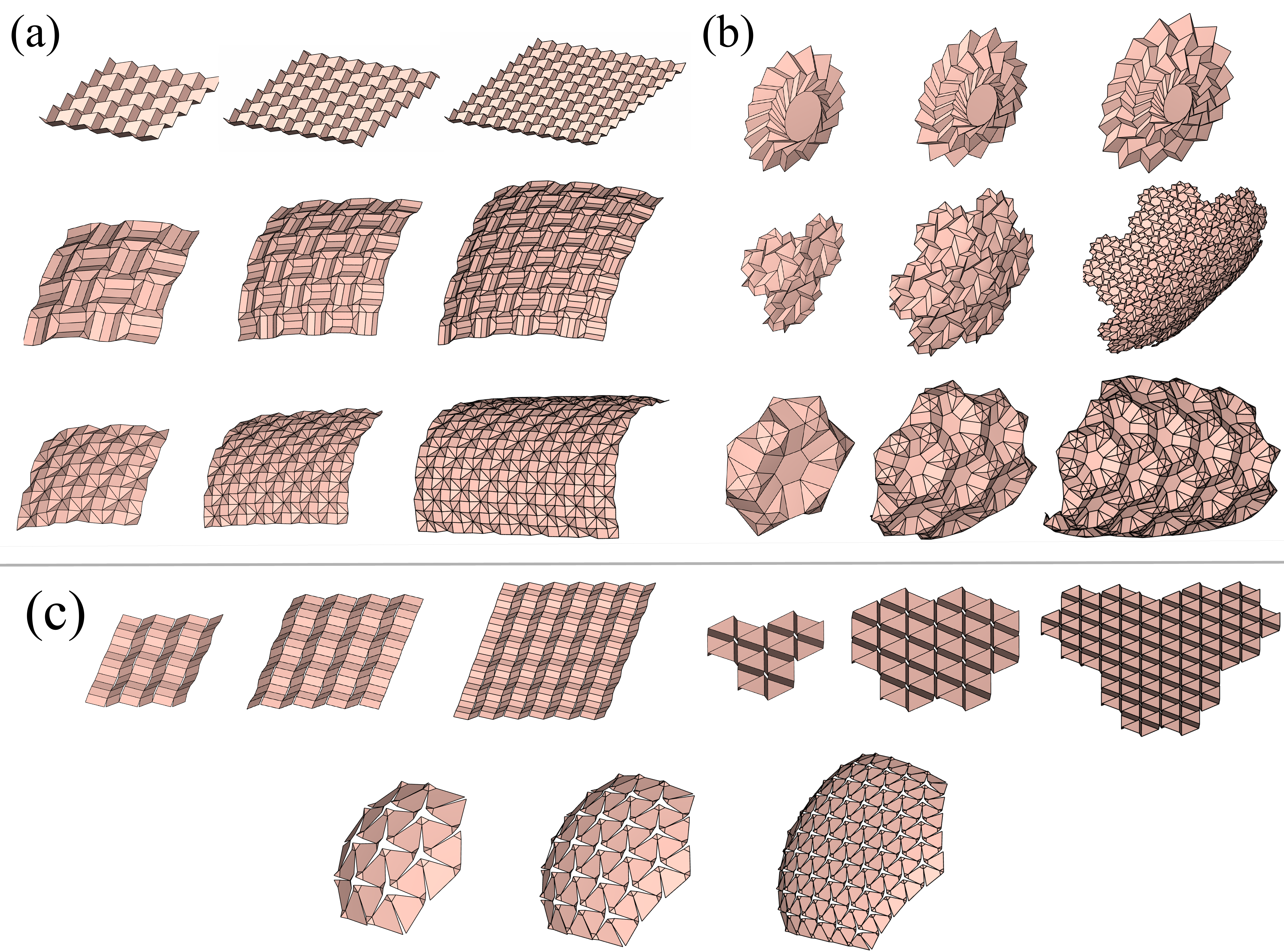}
    \caption{\textbf{Nine types of origami structures with different resolutions considered in our additional analysis.} For each type of structure, three different resolutions are considered. (a)~Periodic origami structures: Miura-ori (with 100, 225, and 400 facets), Huffman Rectangular Weave (with 129, 313, and 577 facets), and Huffman Waterbombs (with 178, 403, and 718 facets). (b)~Rotational origami structures: Lang Oval (with 69, 103, and 137 facets), Hex/Tri (with 97, 205, and 1285 facets), and Lang Honeycomb (with 91, 367, and 829 facets). (c)~Perforated origami structures: Kirigami Honeycomb (with 72, 120, and 276 facets), Perforated Triangle
    (with 39, 106, and 342 facets), and Auxetic Triangle (with 88, 206, and 570 facets).}
    \label{fig:SI_sizes}
\end{figure*}

\begin{figure*}[t!]
    \centering  
    \includegraphics[width=0.8\linewidth]{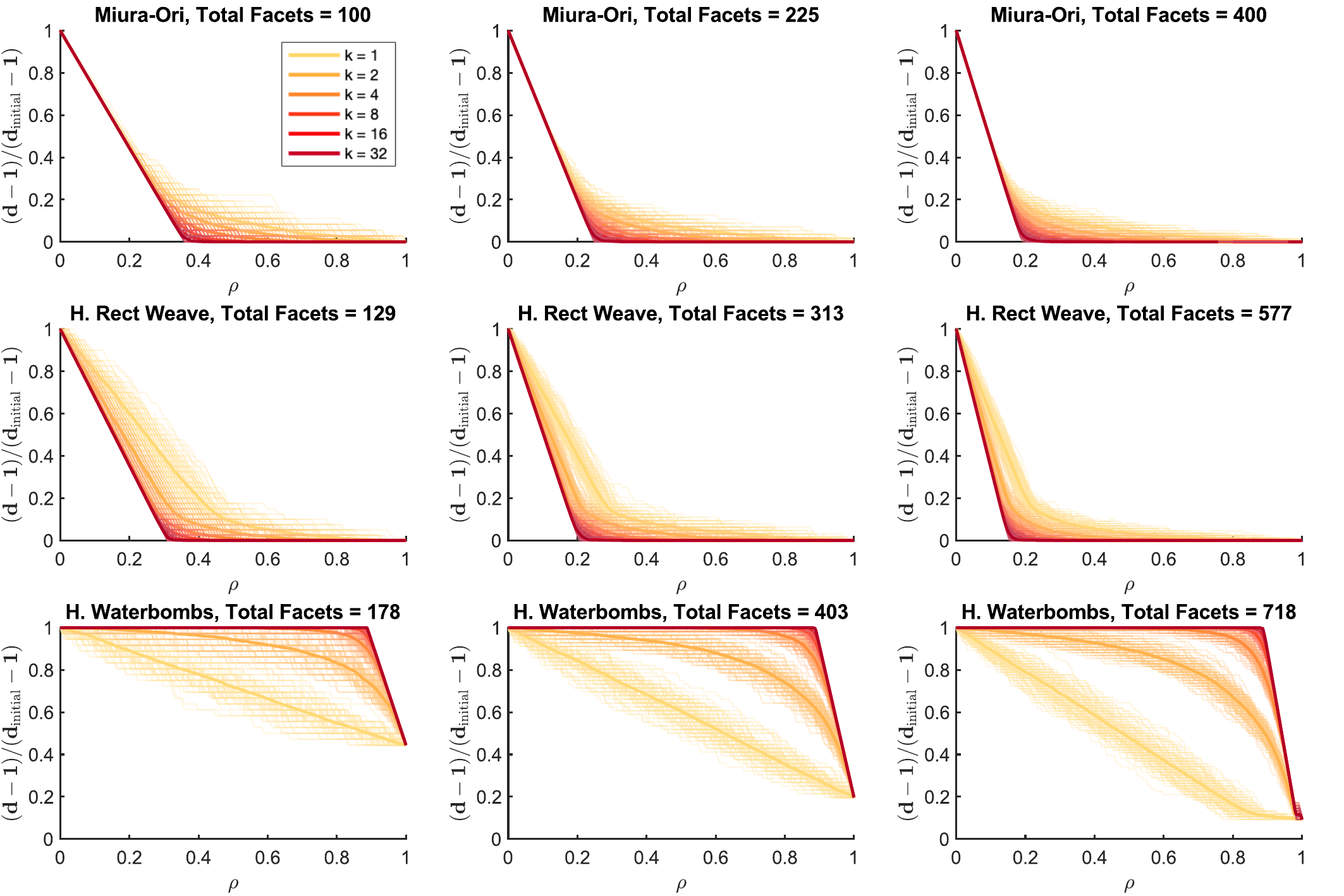}
    \caption{\textbf{Change in the normalized DOF under the Most Efficient selection rule for three types of periodic origami structures with different sizes.} For each type and each size, different numbers of choices $k = 1, 2, 4, 8, 16, 32$ are considered.}
    \label{fig:SIrule1_fig1}
\end{figure*}

\begin{figure*}[t!]
    \centering  
    \includegraphics[width=0.8\linewidth]{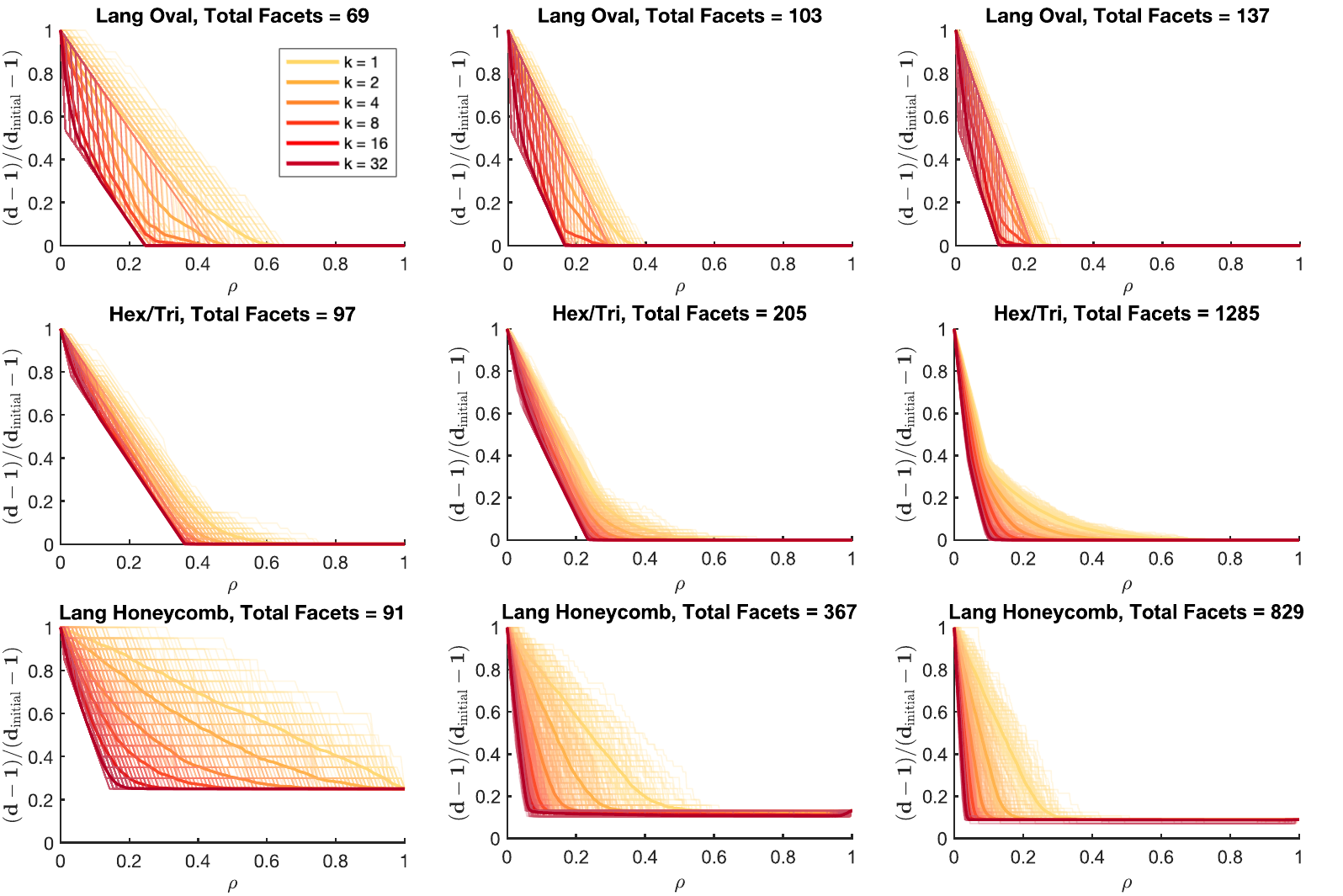}
    \caption{\textbf{Change in the normalized DOF under the Most Efficient selection rule for three types of rotational origami structures with different sizes.} For each type and each size, different numbers of choices $k = 1, 2, 4, 8, 16, 32$ are considered.}
    \label{fig:SIrule1_fig2}
\end{figure*}

\begin{figure*}[t!]
    \centering  
    \includegraphics[width=0.8\linewidth]{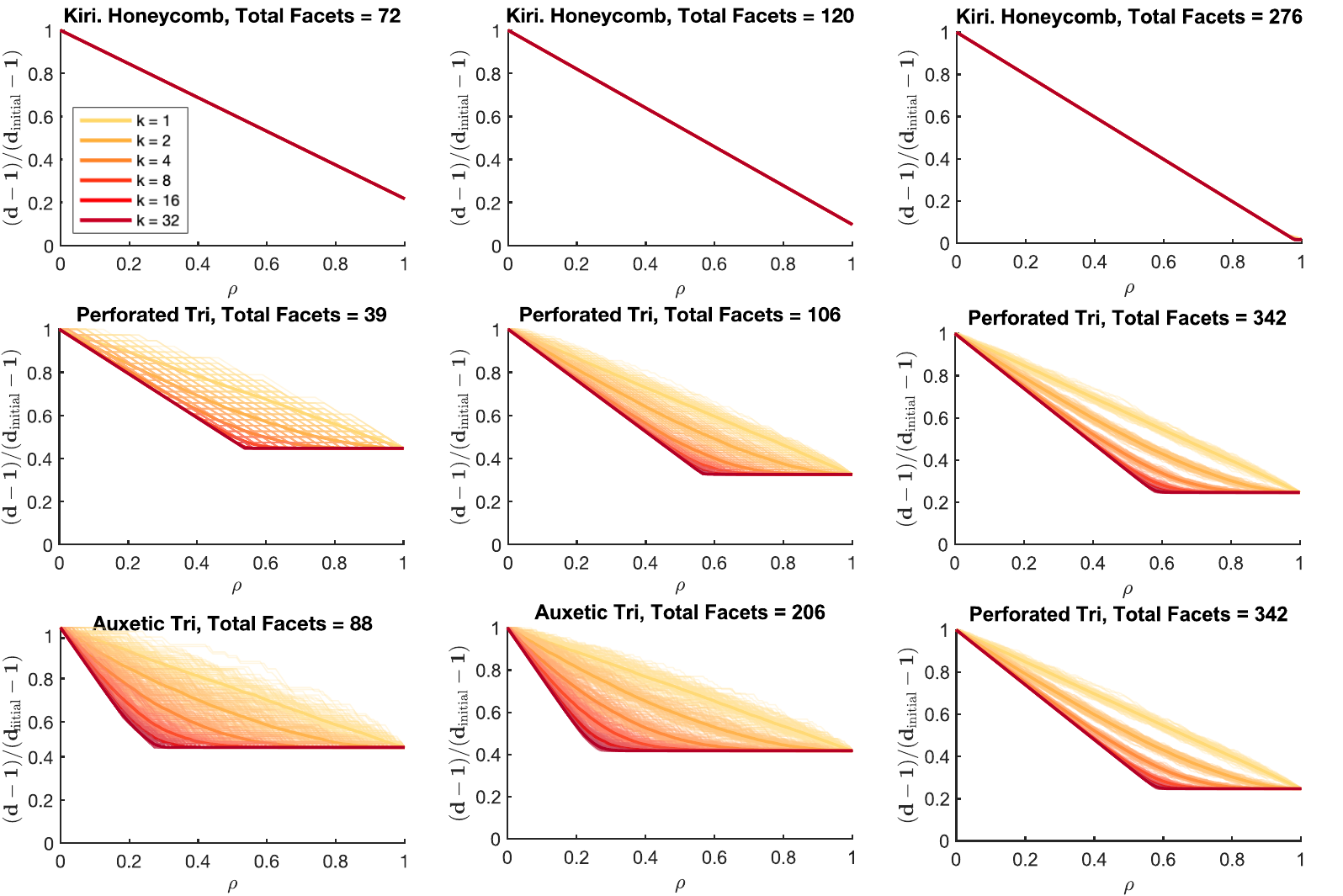}
    \caption{\textbf{Change in the normalized DOF under the Most Efficient selection rule for three types of perforated origami structures with different sizes.} For each type and each size, different numbers of choices $k = 1, 2, 4, 8, 16, 32$ are considered.}
    \label{fig:SIrule1_fig3}
\end{figure*}

\begin{figure*}[t!]
    \centering  
    \includegraphics[width=0.8\linewidth]{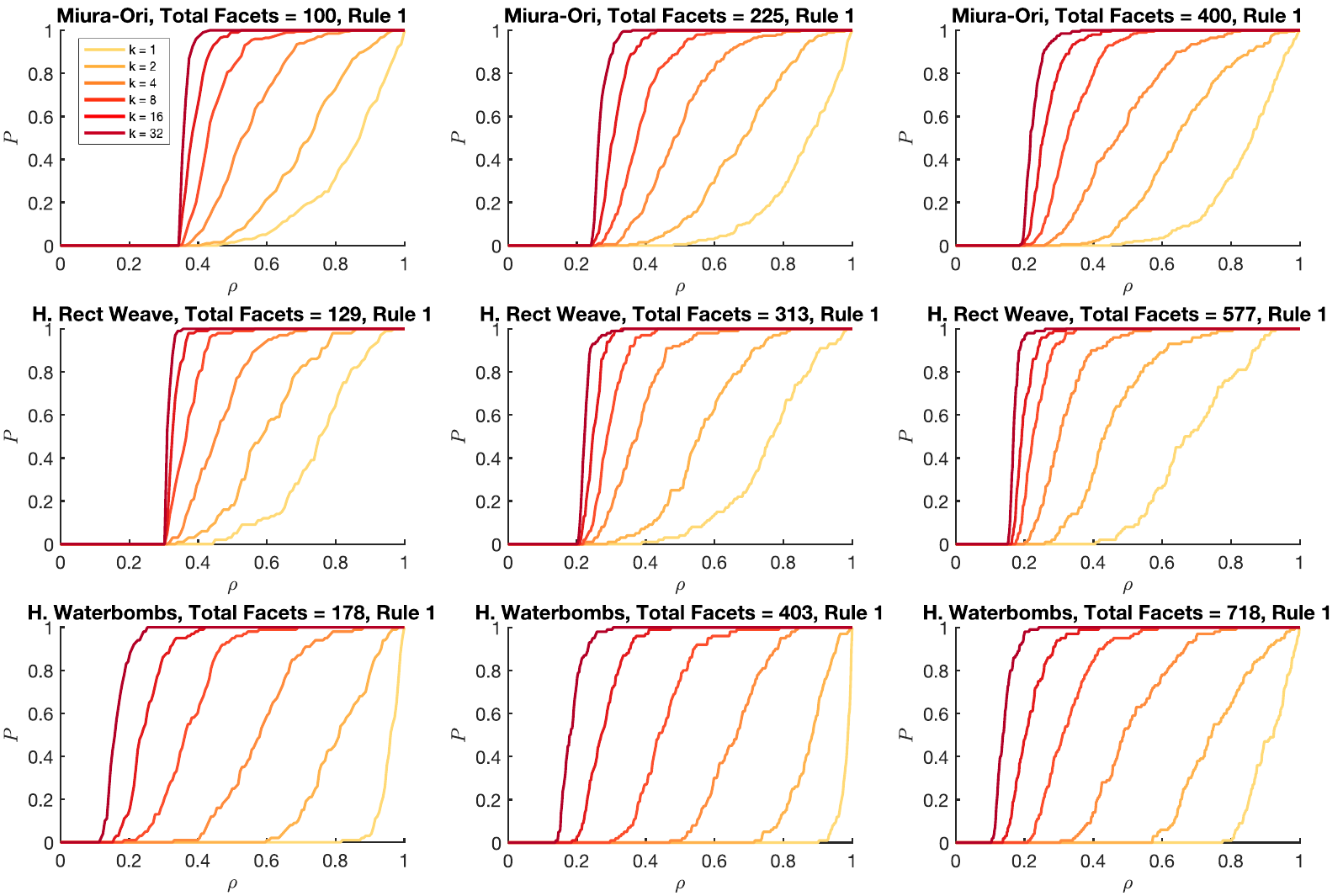}
    \caption{\textbf{Rigidity percolation in periodic origami under the Most Efficient selection rule for three origami structures with different sizes.} For different problem sizes (with the total number of facets indicated in each subfigure title) and different numbers of choices $k = 1, 2, 4, 8, 16, 32$, we compute the probability $P$ of obtaining a minimum-DOF structure at different planarity constraint densities $\rho$, based on 100 simulations.}
    \label{fig:SI_P_rule1_fig1}
\end{figure*}

\begin{figure*}[t!]
    \centering  
    \includegraphics[width=0.8\linewidth]{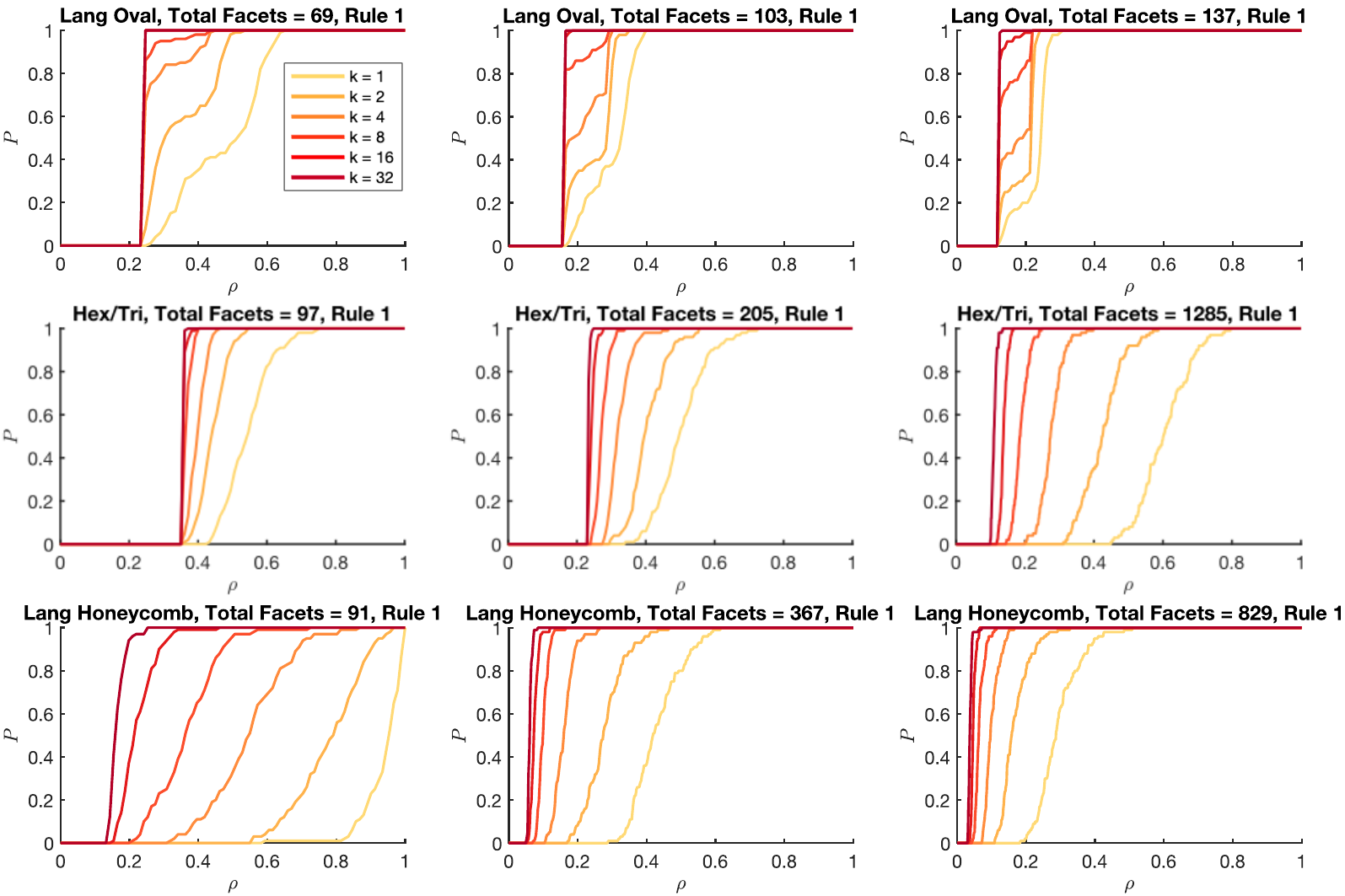}
    \caption{\textbf{Rigidity percolation in rotational origami under the Most Efficient selection rule for three origami structures with different sizes.} For different problem sizes (with the total number of facets indicated in each subfigure title) and different numbers of choices $k = 1, 2, 4, 8, 16, 32$, we compute the probability $P$ of obtaining a minimum-DOF structure at different planarity constraint densities $\rho$, based on 100 simulations.}
    \label{fig:SI_P_rule1_fig2}
\end{figure*}

\begin{figure*}[t!]
    \centering  
    \includegraphics[width=0.8\linewidth]{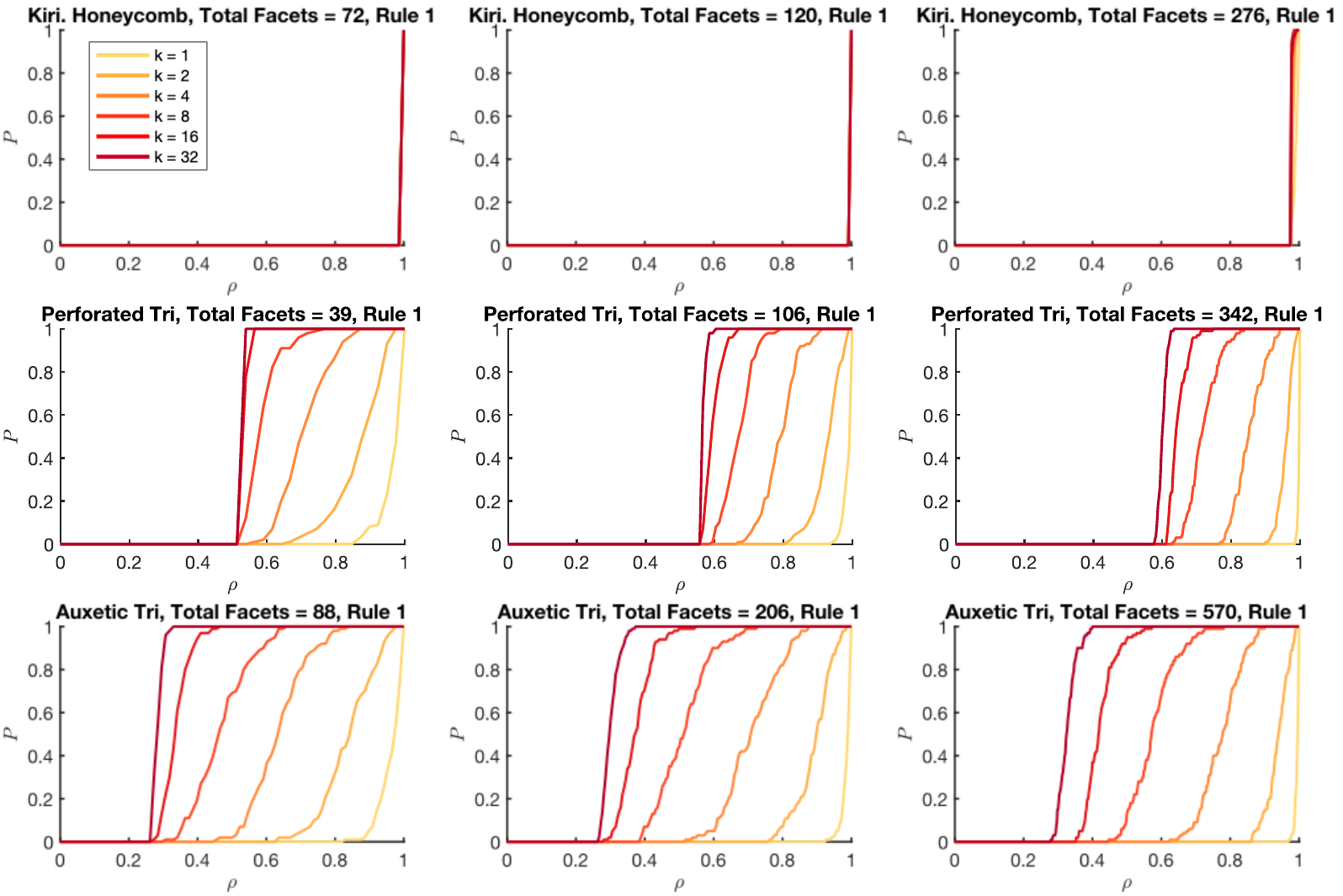}
    \caption{\textbf{Rigidity percolation in perforated origami under the Most Efficient selection rule for three origami structures with different sizes.} For different problem sizes (with the total number of facets indicated in each subfigure title) and different numbers of choices $k = 1, 2, 4, 8, 16, 32$, we compute the probability $P$ of obtaining a minimum-DOF structure at different planarity constraint densities $\rho$, based on 100 simulations.}
    \label{fig:SI_P_rule1_fig3}
\end{figure*}

\begin{figure*}[t!]
    \centering  
    \includegraphics[width=0.8\linewidth]{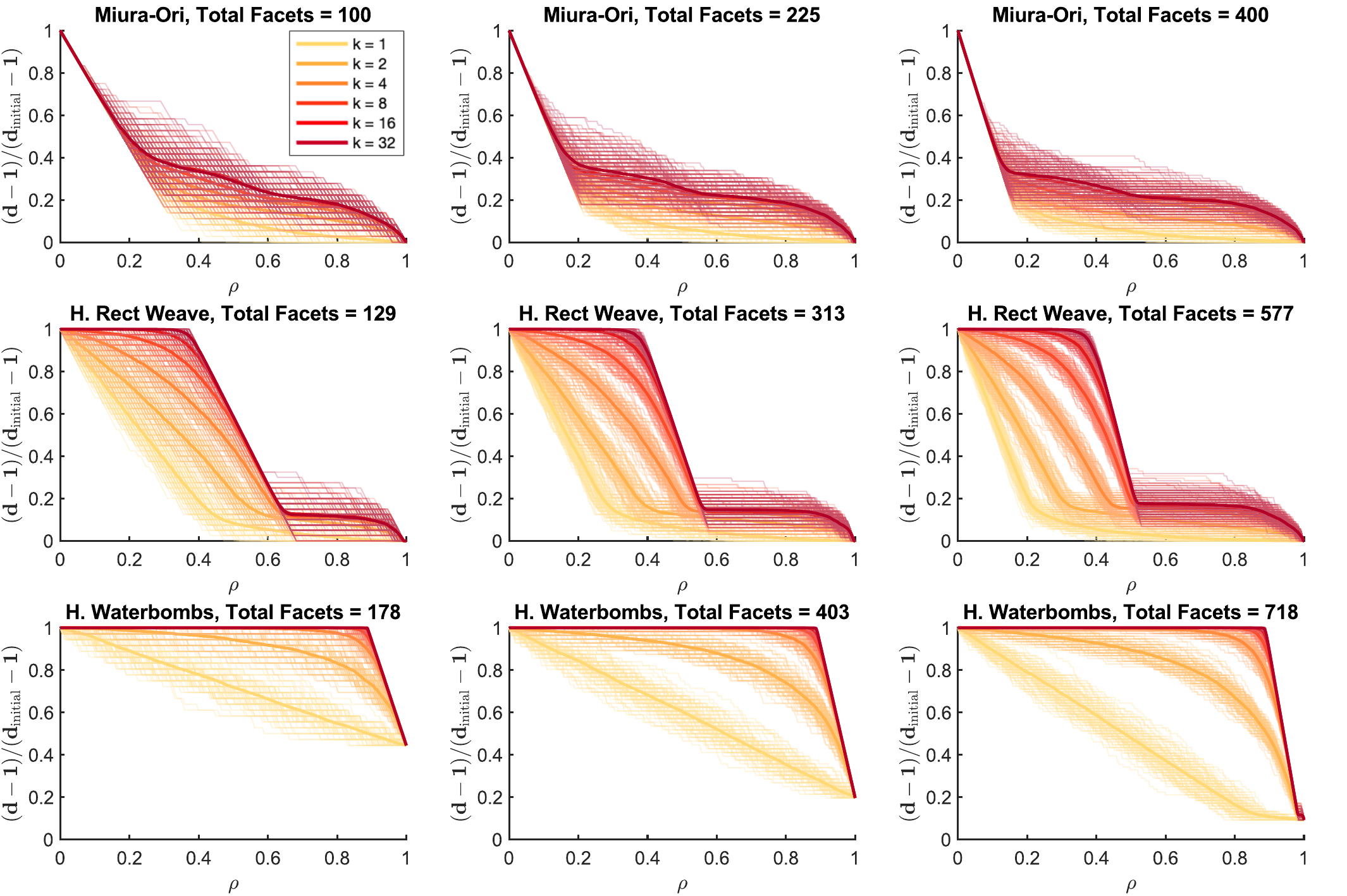}
    \caption{\textbf{Change in the normalized DOF under the Least Efficient selection rule with different numbers of choices for three types of periodic origami structures with different sizes.} For each type and each size, different numbers of choices $k = 1, 2, 4, 8, 16, 32$ are considered.}
    \label{fig:SIrule2_fig1}
\end{figure*}

\begin{figure*}[t!]
    \centering  
    \includegraphics[width=0.8\linewidth]{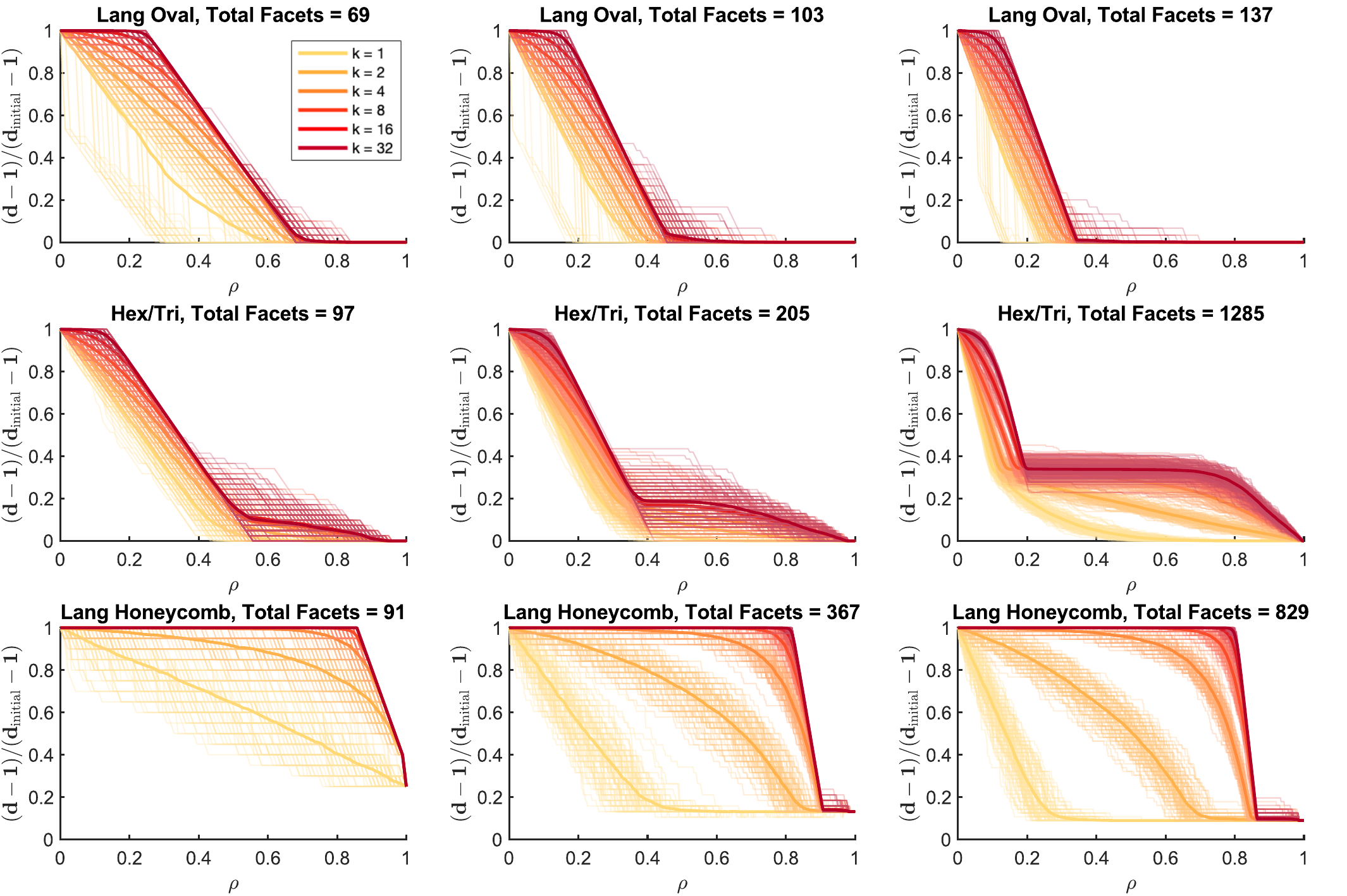}
    \caption{\textbf{Change in the normalized DOF under the Least Efficient selection rule with different numbers of choices for three types of rotational origami structures with different sizes.} For each type and each size, different numbers of choices $k = 1, 2, 4, 8, 16, 32$ are considered.}
    \label{fig:SIrule2_fig2}
\end{figure*}

\begin{figure*}[t!]
    \centering  
    \includegraphics[width=0.8\linewidth]{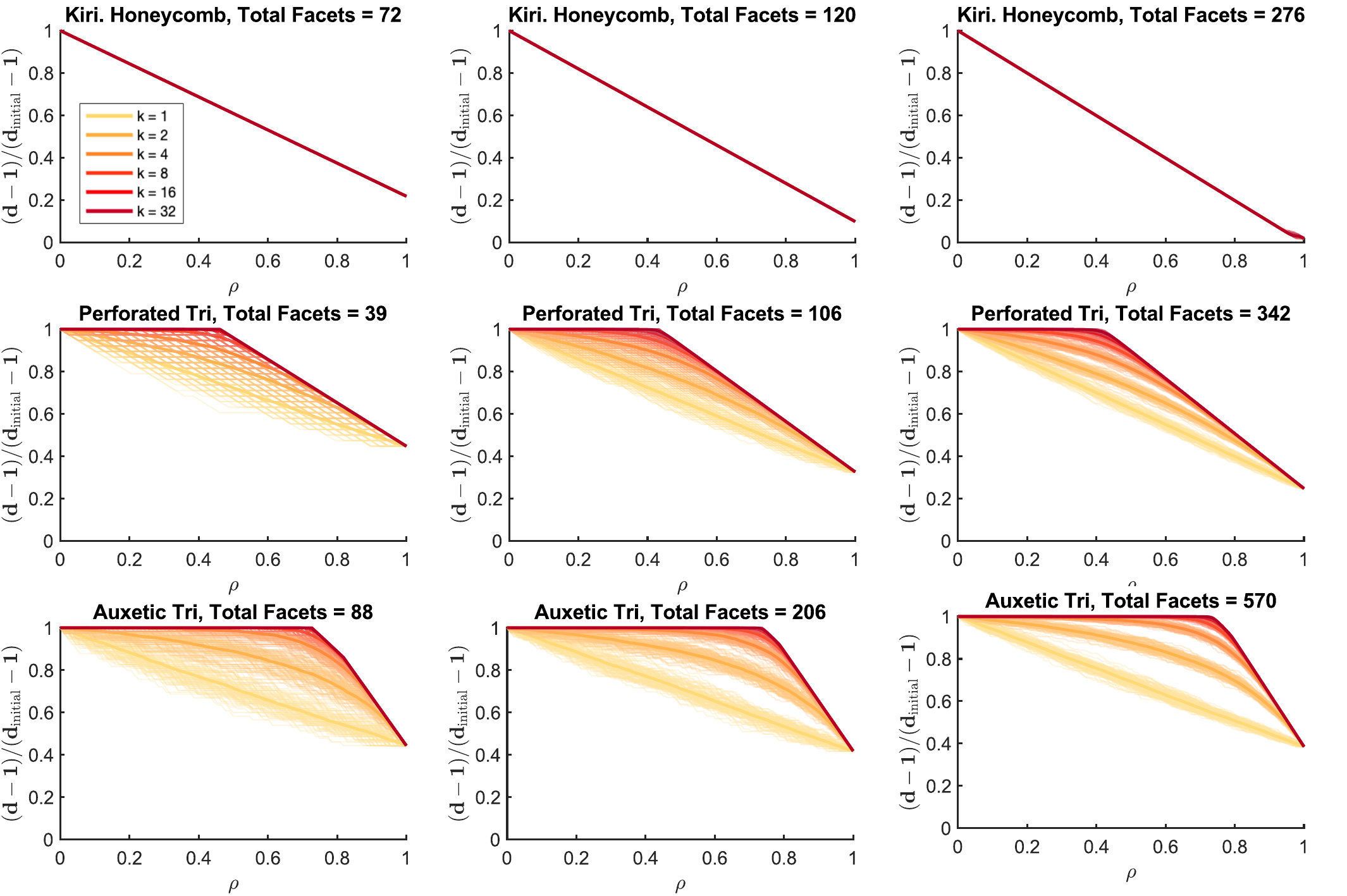}
    \caption{\textbf{Change in the normalized DOF under the Least Efficient selection rule with different numbers of choices for three types of perforated origami structures with different sizes.} For each type and each size, different numbers of choices $k = 1, 2, 4, 8, 16, 32$ are considered.}
    \label{fig:SIrule2_fig3}
\end{figure*}

\begin{figure*}[t!]
    \centering  
    \includegraphics[width=0.8\linewidth]{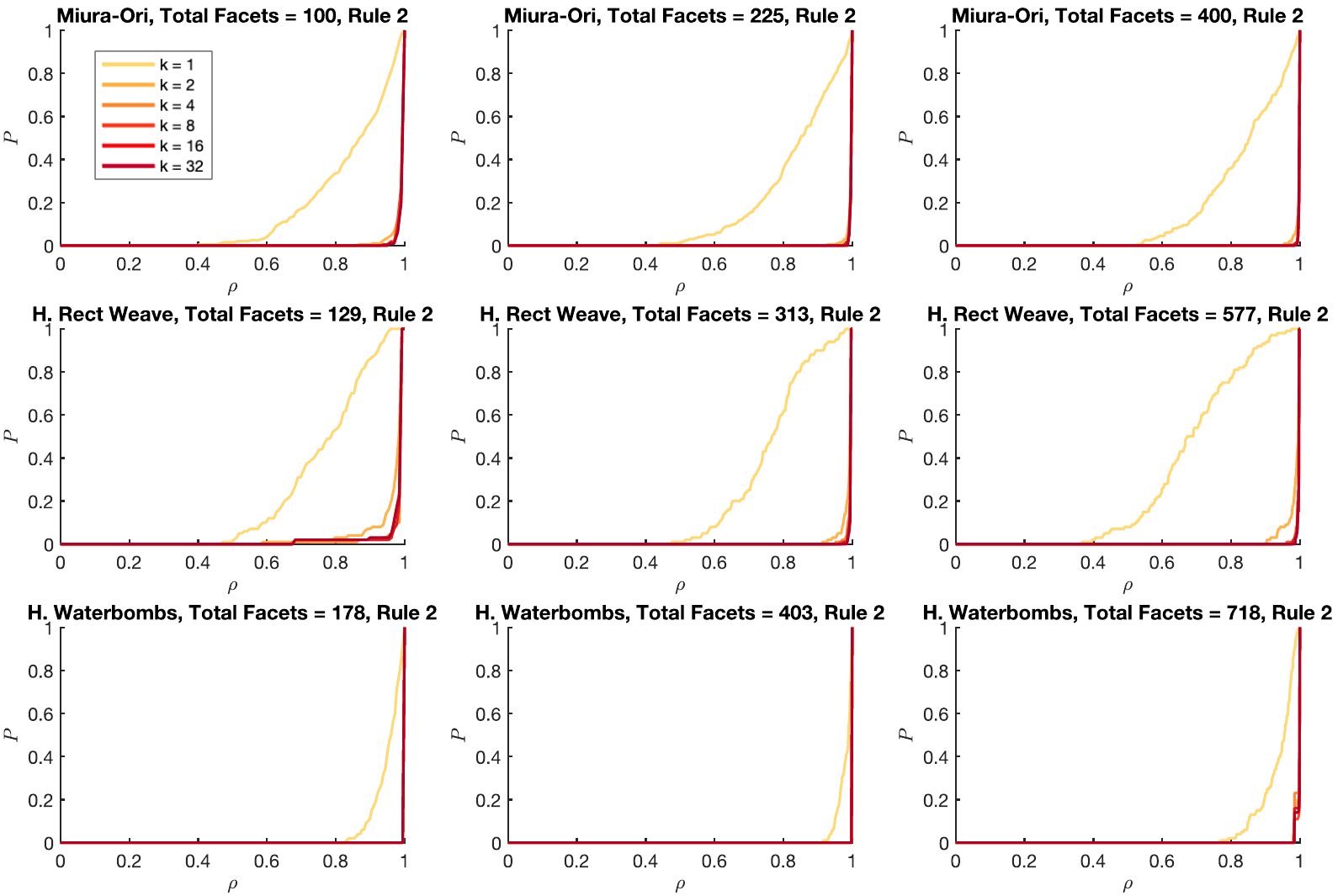}
    \caption{\textbf{Rigidity percolation in periodic origami under the Least Efficient selection rule for three origami structures with different sizes.} For different problem sizes (with the total number of facets indicated in each subfigure title) and different numbers of choices $k = 1, 2, 4, 8, 16, 32$, we compute the probability $P$ of obtaining a minimum-DOF structure at different planarity constraint densities $\rho$, based on 100 simulations.}
    \label{fig:SI_P_rule2_fig1}
\end{figure*}

\begin{figure*}[t!]
    \centering  
    \includegraphics[width=0.8\linewidth]{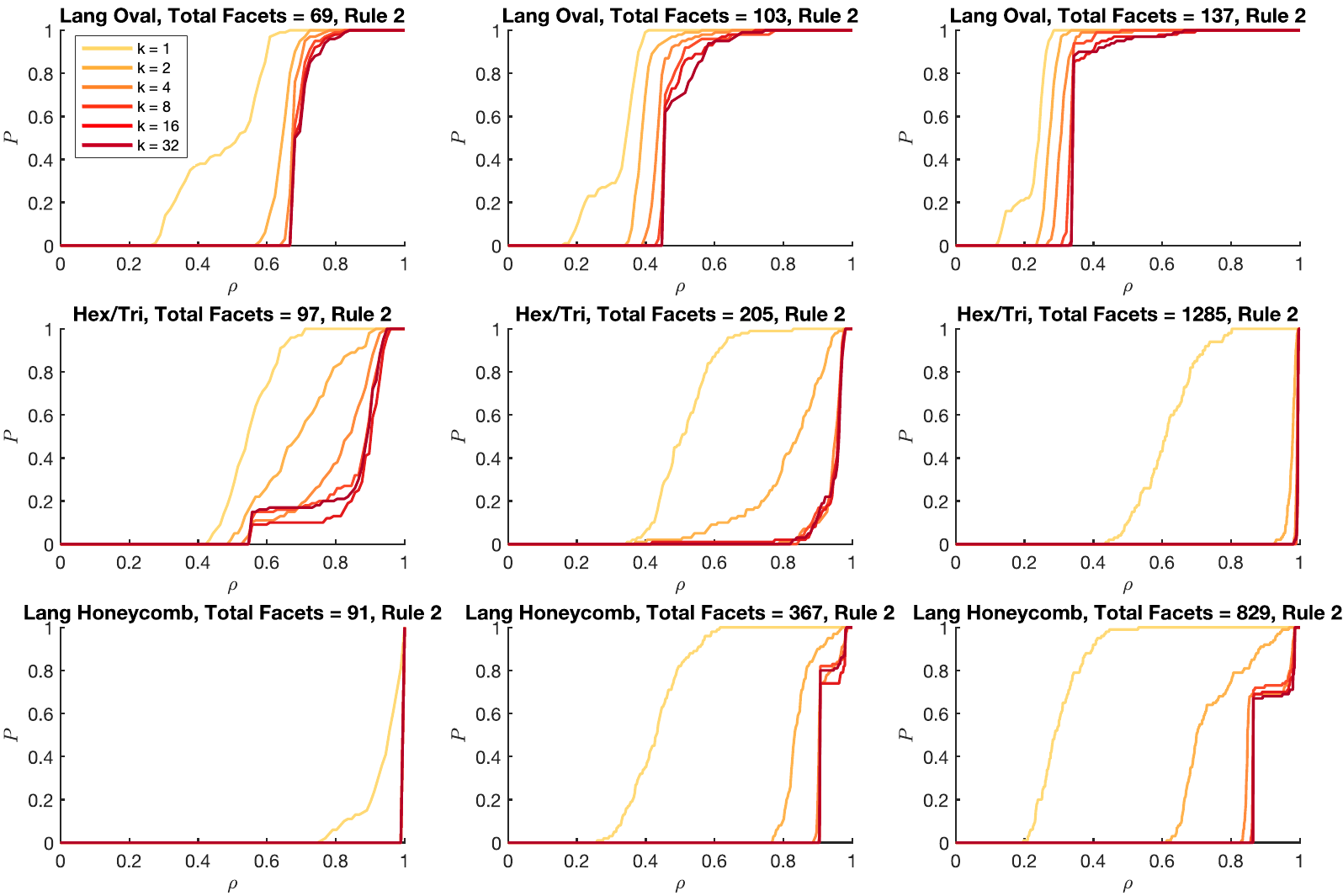}
    \caption{\textbf{Rigidity percolation in rotational origami under the Least Efficient selection rule for three origami structures with different sizes.} For different problem sizes (with the total number of facets indicated in each subfigure title) and different numbers of choices $k = 1, 2, 4, 8, 16, 32$, we compute the probability $P$ of obtaining a minimum-DOF structure at different planarity constraint densities $\rho$, based on 100 simulations.}
    \label{fig:SI_P_rule2_fig2}
\end{figure*}

\begin{figure*}[t!]
    \centering  
    \includegraphics[width=0.8\linewidth]{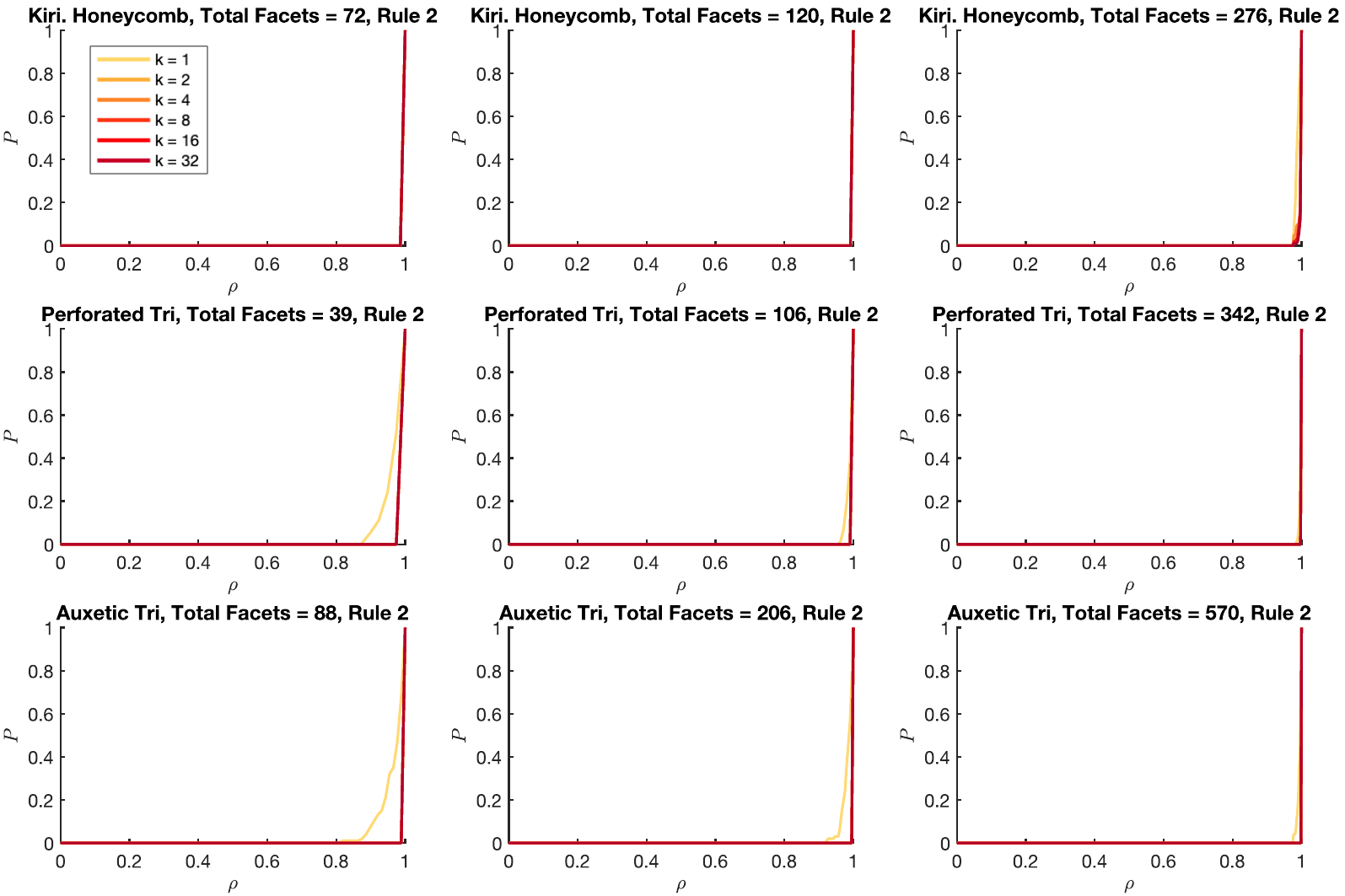}
    \caption{\textbf{Rigidity percolation in perforated origami under the Least Efficient selection rule for three origami structures with different resolutions.} For different problem sizes (with the total number of facets indicated in each subfigure title) and different numbers of choices $k = 1, 2, 4, 8, 16, 32$, we compute the probability $P$ of obtaining a minimum-DOF structure at different planarity constraint densities $\rho$, based on 100 simulations.}
    \label{fig:SI_P_rule2_fig3}
\end{figure*}

\begin{figure*}[t!]
    \centering  
    \includegraphics[width=\linewidth]{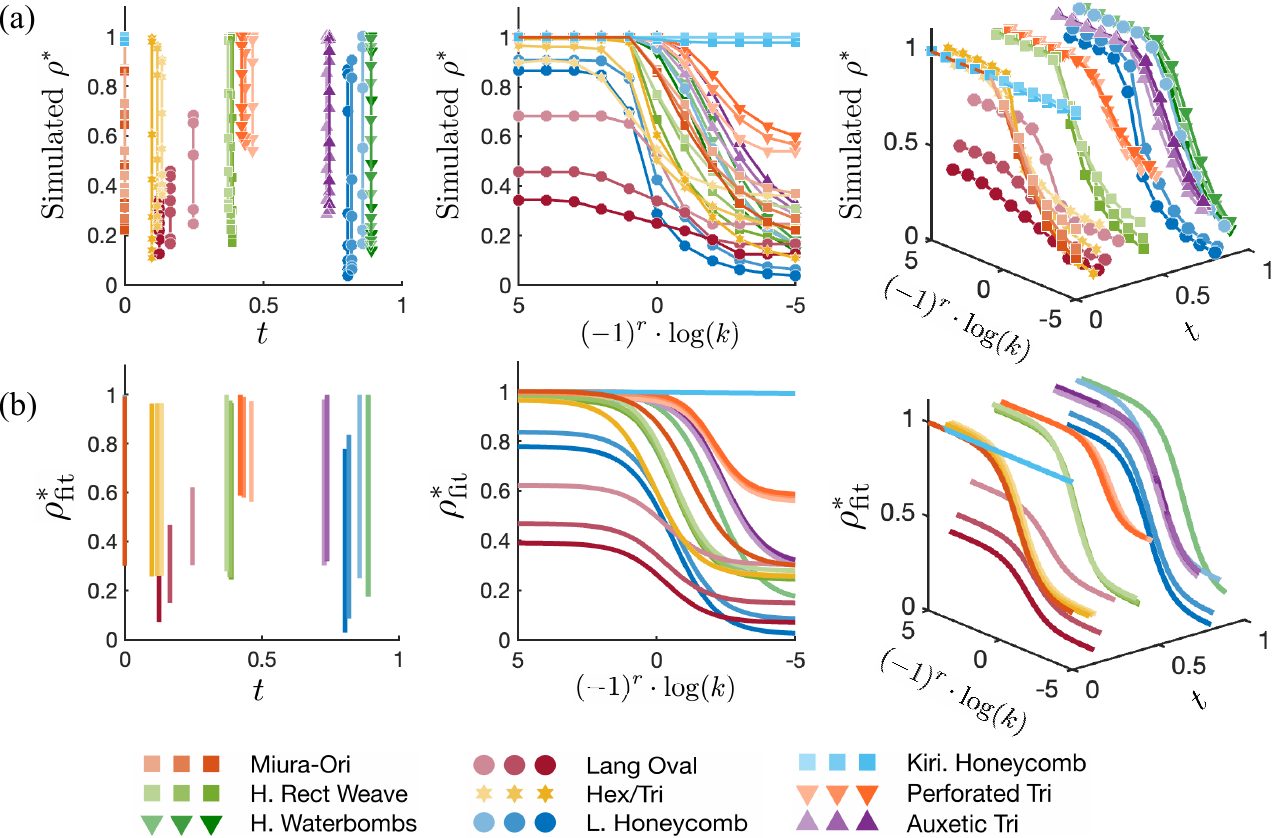}
    \caption{\textbf{The simulated critical transition density $\rho^{*}$ and the fitted values $\rho_{\text{fit}}^{*}$ for the nine different types of origami structures under different selection rules and different number of choices $k$.} The nine different types of origami structures are represented using different color and marker styles. For each structure, three resolutions are considered, indicated by different marker transparencies (lower transparency corresponds to lower resolution; see the caption of Fig.~\ref{fig:SI_sizes} for detailed resolution information).(a) Simulated results: critical transition density $\rho^{*}$ versus selection rule; $\rho^{*}$ versus triangular facet ratio; and a 3D plot combining both variables.
    (b) Fitted results: fitted critical transition density $\rho_{\text{fit}}^{*}$ versus selection rule; $\rho_{\text{fit}}^{*}$ versus triangular facet ratio; and a 3D plot combining both variables.
}
    \label{fig:SI_critical_rho_3d}
\end{figure*}

\begin{table*}[t]
    \centering
    \begin{tabular}{|c|C{25mm}|C{25mm}|C{25mm}|C{30mm}|C{30mm}|}
        \hline
        \textbf{Pattern Name} & \textbf{Number of Facets} & \textbf{Triangular Facet Ratio $t$} & \textbf{Number of Choices $k$} & \textbf{Most Efficient Selection Rule $\rho^*$} & \textbf{Least Efficient Selection Rule $\rho^*$} \\
        \hline
        \multirow{6}{*}{Miura-ori} & \multirow{6}{*}{100} & \multirow{6}{*}{0.00}
        & 1  & 0.8700 & 0.8700 \\
        &    &    & 2  & 0.7300 & 1.0000 \\
        &    &    & 4  & 0.6400 & 1.0000 \\
        &    &    & 8  & 0.4400 & 1.0000 \\
        &    &    & 16 & 0.3900 & 1.0000 \\
        &    &    & 32 & 0.3700 & 1.0000 \\
        \hline
        \multirow{6}{*}{Miura-ori} & \multirow{6}{*}{225} & \multirow{6}{*}{0.00}
        & 1  & 0.8711 & 0.8533 \\
        &    &    & 2  & 0.6844 & 1.0000 \\
        &    &    & 4  & 0.4844 & 1.0000 \\
        &    &    & 8  & 0.3822 & 1.0000 \\
        &    &    & 16 & 0.3111 & 1.0000 \\
        &    &    & 32 & 0.2667 & 1.0000 \\
        \hline
        \multirow{6}{*}{Miura-ori} &
        \multirow{6}{*}{400} & \multirow{6}{*}{0.00}
        & 1  & 0.8625 & 0.8550 \\
        &    &    & 2  & 0.6450 & 1.0000 \\
        &    &    & 4  & 0.4725 & 1.0000 \\
        &    &    & 8  & 0.3275 & 1.0000\\
        &    &    & 16 & 0.2550 & 1.0000 \\
        &    &    & 32 & 0.2200 & 1.0000\\
        \hline


        \multirow{6}{*}{\shortstack{Huffman\\Rectangular Weave}} & \multirow{6}{*}{129} & \multirow{6}{*}{0.37}
        & 1  & 0.7519 & 0.7909 \\
        &    &    & 2  & 0.5891 & 0.9845 \\
        &    &    & 4  & 0.4574 & 0.9922 \\
        &    &    & 8  & 0.3721 & 0.9922 \\
        &    &    & 16 & 0.3256 & 0.9922 \\
        &    &    & 32 & 0.3101 & 0.9922 \\
        \hline
        \multirow{6}{*}{\shortstack{Huffman\\Rectangular Weave}} &
        \multirow{6}{*}{313} & \multirow{6}{*}{0.38}
        & 1  & 0.7636 & 0.7688 \\
        &    &    & 2  & 0.5559 & 0.9936 \\
        &    &    & 4  & 0.3770 & 0.9968 \\
        &    &    & 8  & 0.2875 & 0.9968 \\
        &    &    & 16 & 0.2460 & 0.9968 \\
        &    &    & 32 & 0.2236 & 0.9968 \\
        \hline
        \multirow{6}{*}{\shortstack{Huffman\\Rectangular Weave}} &
        \multirow{6}{*}{577} & \multirow{6}{*}{0.39}
        & 1  & 0.6655 & 0.6724\\
        &    &    & 2  & 0.4333 & 0.9965 \\
        &    &    & 4  & 0.3016& 0.9983 \\
        &    &    & 8  & 0.2236& 0.9983\\
        &    &    & 16 & 0.1924& 0.9983\\
        &    &    & 32 & 0.1681 & 0.9983 \\
        \hline


        \multirow{6}{*}{Huffman Waterbombs} & 
        \multirow{6}{*}{178} & \multirow{6}{*}{0.89}
        & 1  & 0.9607 & 0.9607 \\
        &    &    & 2  & 0.8146 & 1.0000\\
        &    &    & 4  & 0.5843 & 1.0000 \\
        &    &    & 8  & 0.3652 & 1.0000 \\
        &    &    & 16 & 0.2360 & 1.0000 \\
        &    &    & 32 & 0.1629 & 1.0000 \\
        \hline
        \multirow{6}{*}{Huffman Waterbombs} &
        \multirow{6}{*}{403} & \multirow{6}{*}{0.89}
        & 1  & 0.9876 & 0.9901 \\
        &    &    & 2  & 0.8834 & 1.0000 \\
        &    &    & 4  & 0.6650 & 1.0000 \\
        &    &    & 8  & 0.4392 & 1.0000 \\
        &    &    & 16 & 0.2705& 1.0000 \\
        &    &    & 32 & 0.1787 & 1.0000 \\
        \hline
        \multirow{6}{*}{Huffman Waterbombs} & \multirow{6}{*}{718} & \multirow{6}{*}{0.89}
        & 1  & 0.9248 & 0.9554 \\
        &    &    & 2  & 0.7437 & 1.0000\\
        &    &    & 4  & 0.4875 & 1.0000\\
        &    &    & 8  & 0.3106 & 1.0000\\
        &    &    & 16 & 0.2033 & 1.0000\\
        &    &    & 32 & 0.1379 & 1.0000 \\
        \hline

\end{tabular}
    \caption{\textbf{The critical transition density~$\rho^*$ for different periodic origami structures under the Most Efficient and Least Efficient selection rules with different number of choices $k$}.}
    \label{tab:critical_density_1}
\end{table*}


\begin{table*}[t]
    \centering
    \begin{tabular}{|c|C{25mm}|C{25mm}|C{25mm}|C{30mm}|C{30mm}|}
        \hline
        \textbf{Pattern Name} & \textbf{Number of Facets} & \textbf{Triangular Facet Ratio $t$} & \textbf{Number of Choices $k$} & \textbf{Most Efficient Selection Rule $\rho^*$} & \textbf{Least Efficient Selection Rule $\rho^*$} \\
        \hline
        \multirow{6}{*}{Lang Oval} & \multirow{6}{*}{69} & \multirow{6}{*}{0.25}
        & 1  & 0.5217 & 0.5217 \\
        &    &    & 2  & 0.3043 & 0.6522 \\
        &    &    & 4  & 0.2464 & 0.6812 \\
        &    &    & 8  & 0.2464 & 0.6812 \\
        &    &    & 16 & 0.2464 & 0.6812 \\
        &    &    & 32 & 0.2464 & 0.6812 \\
        \hline
        \multirow{6}{*}{Lang Oval} &
        \multirow{6}{*}{103} & \multirow{6}{*}{0.17}
        & 1  & 0.3301 & 0.3495\\
        &    &    & 2  & 0.2913 & 0.3883 \\
        &    &    & 4  & 0.1845 & 0.4369 \\
        &    &    & 8  & 0.1650 & 0.4563 \\
        &    &    & 16 & 0.1650& 0.4563 \\
        &    &    & 32 & 0.1650 & 0.4563 \\
        \hline
        \multirow{6}{*}{Lang Oval} & \multirow{6}{*}{137} & \multirow{6}{*}{0.17}
        & 1  & 0.2482 & 0.2482\\
        &    &    & 2  & 0.2190 & 0.2774 \\
        &    &    & 4  & 0.1825 & 0.3066 \\
        &    &    & 8  & 0.1241 & 0.3358\\
        &    &    & 16 & 0.1241 & 0.3431\\
        &    &    & 32 & 0.1241 & 0.3431 \\
        \hline


        \multirow{6}{*}{Hex/Tri} & \multirow{6}{*}{97} & \multirow{6}{*}{0.13}
        & 1  & 0.5464 & 0.5464 \\
        &    &    & 2  & 0.4433 & 0.6907 \\
        &    &    & 4  & 0.4021 & 0.8351\\
        &    &    & 8  & 0.3711 & 0.8966 \\
        &    &    & 16 & 0.3608 & 0.9072 \\
        &    &    & 32 & 0.3608 & 0.8969 \\
        \hline
        \multirow{6}{*}{Hex/Tri} &
        \multirow{6}{*}{205} & \multirow{6}{*}{0.12}
        & 1  & 0.4976 & 0.5073 \\
        &    &    & 2  & 0.3951 & 0.8439 \\
        &    &    & 4  & 0.3171 & 0.9512 \\
        &    &    & 8  & 0.2732 & 0.9610 \\
        &    &    & 16 & 0.2439 & 0.9610 \\
        &    &    & 32 & 0.2341 & 0.9659 \\
        \hline
        \multirow{6}{*}{Hex/Tri} & \multirow{6}{*}{1285} & \multirow{6}{*}{0.10}
        & 1  & 0.6008 & 0.6086\\
        &    &    & 2  & 0.4233 & 0.9798 \\
        &    &    & 4  & 0.2739 & 0.9922 \\
        &    &    & 8  & 0.1852 & 0.9946\\
        &    &    & 16 & 0.1377 & 0.9946\\
        &    &    & 32 & 0.1097 & 0.9946 \\
        \hline


        \multirow{6}{*}{Lang Honeycomb} & \multirow{6}{*}{91} & \multirow{6}{*}{0.86}
        & 1  & 0.9560 & 0.9560\\
        &    &    & 2  & 0.7912 & 1.0000 \\
        &    &    & 4  & 0.5495 & 1.0000 \\
        &    &    & 8  & 0.3623 & 1.0000 \\
        &    &    & 16 & 0.2198 & 1.0000 \\
        &    &    & 32 & 0.1648 & 1.0000 \\
        \hline
        \multirow{6}{*}{Lang Honeycomb} &
        \multirow{6}{*}{367} & \multirow{6}{*}{0.82}
        & 1  & 0.4196 & 0.4305 \\
        &    &    & 2  & 0.2752 & 0.8338 \\
        &    &    & 4  & 0.1608 & 0.9046 \\
        &    &    & 8  & 0.1008 & 0.9074 \\
        &    &    & 16 & 0.0736& 0.9074 \\
        &    &    & 32 & 0.0627 & 0.9074 \\
        \hline
        \multirow{6}{*}{Lang Honeycomb} & \multirow{6}{*}{829} & \multirow{6}{*}{0.80}
        & 1  & 0.2871 & 0.2871 \\
        &    &    & 2  & 0.1580 & 0.6996 \\
        &    &    & 4  & 0.0965 & 0.8480 \\
        &    &    & 8  & 0.0627 & 0.8649\\
        &    &    & 16 & 0.0458 & 0.8649\\
        &    &    & 32 & 0.0374 & 0.8649\\
        \hline

    \end{tabular}
    \caption{\textbf{The Critical transition density $\rho^*$ of different rotational origami structures under the Most Efficient and Least Efficient selection rules with different number of choices $k$}.}
    \label{tab:critical_density_2}
\end{table*}

\begin{table*}[t]
    \centering
    \begin{tabular}{|c|C{25mm}|C{25mm}|C{25mm}|C{30mm}|C{30mm}|}
        \hline
        \textbf{Pattern Name} & \textbf{Number of Facets} & \textbf{Triangular Facet Ratio $t$} & \textbf{Number of Choices $k$} & \textbf{Most Efficient Selection Rule $\rho^*$} & \textbf{Least Efficient Selection Rule $\rho^*$} \\
        \hline
        
        \multirow{6}{*}{Kirigami Honeycomb} & \multirow{6}{*}{72} & \multirow{6}{*}{0.00}
        & 1  & 1.0000 & 1.0000 \\
        &    &    & 2  & 1.0000 & 1.0000  \\
        &    &    & 4  & 1.0000  & 1.0000  \\
        &    &    & 8  & 1.0000  & 1.0000  \\
        &    &    & 16 & 1.0000  & 1.0000  \\
        &    &    & 32 & 1.0000  & 1.0000  \\
        \hline
        \multirow{6}{*}{Kirigami Honeycomb} &
        \multirow{6}{*}{120} & \multirow{6}{*}{0.00}
        & 1  & 1.0000 & 1.0000 \\
        &    &    & 2  & 1.0000 & 1.0000  \\
        &    &    & 4  & 1.0000 & 1.0000  \\
        &    &    & 8  & 1.0000 & 1.0000 \\
        &    &    & 16 & 1.0000 & 1.0000 \\
        &    &    & 32 & 1.0000 & 1.0000 \\
        \hline
        \multirow{6}{*}{Kirigami Honeycomb} & \multirow{6}{*}{276} & \multirow{6}{*}{0.00}
        & 1  & 0.8711 & 0.4200 \\
        &    &    & 2  & 1.0000 & 1.0000 \\
        &    &    & 4  & 1.0000 & 1.0000 \\
        &    &    & 8  & 1.0000 & 1.0000\\
        &    &    & 16 & 1.0000 & 1.0000\\
        &    &    & 32 & 1.0000 & 1.0000 \\
        \hline
        

        \multirow{6}{*}{Auxetic Triangle} & \multirow{6}{*}{88} & \multirow{6}{*}{0.73}
        & 1  & 0.9773 & 0.9886 \\
        &    &    & 2  & 0.8523 & 1.0000 \\
        &    &    & 4  & 0.6364 & 1.0000 \\
        &    &    & 8  & 0.4659 & 1.0000 \\
        &    &    & 16 & 0.3409 & 1.0000 \\
        &    &    & 32 & 0.2814 & 1.0000 \\
        \hline
        \multirow{6}{*}{Auxetic Triangle} &
        \multirow{6}{*}{206} & \multirow{6}{*}{0.74}
        & 1  & 0.9903 & 0.9903 \\
        &    &    & 2  & 0.8981 & 1.0000 \\
        &    &    & 4  & 0.7089 & 1.0000 \\
        &    &    & 8  & 0.5149 & 1.0000 \\
        &    &    & 16 & 0.3738 & 1.0000 \\
        &    &    & 32 & 0.3010 & 1.0000 \\
        \hline
        \multirow{6}{*}{Auxetic Triangle} & \multirow{6}{*}{570} & \multirow{6}{*}{0.74}
        & 1  & 0.9965 & 0.9965 \\
        &    &    & 2  & 0.9404 & 1.0000\\
        &    &    & 4  & 0.7789 & 1.0000 \\
        &    &    & 8  & 0.5719 & 1.0000\\
        &    &    & 16 & 0.4175 & 1.0000\\
        &    &    & 32 & 0.3246 & 1.0000 \\
        \hline
        

        \multirow{6}{*}{Perforated Triangle} & \multirow{6}{*}{49} & \multirow{6}{*}{0.46}
        & 1  & 1.0000 & 1.0000 \\
        &    &    & 2  & 0.8974 & 1.0000 \\
        &    &    & 4  & 0.7179 & 1.0000 \\
        &    &    & 8  & 0.5897 & 1.0000 \\
        &    &    & 16 & 0.5385 & 1.0000 \\
        &    &    & 32 & 0.5385 & 1.0000 \\
        \hline
        \multirow{6}{*}{Perforated Triangle} &
        \multirow{6}{*}{106} & \multirow{6}{*}{0.43}
        & 1  & 1.0000 & 1.0000 \\
        &    &    & 2  & 0.9340 & 1.0000\\
        &    &    & 4  & 0.7925 & 1.0000 \\
        &    &    & 8  & 0.6698 & 1.0000 \\
        &    &    & 16 & 0.5940 & 1.0000 \\
        &    &    & 32 & 0.5560 & 1.0000 \\
        \hline
        \multirow{6}{*}{Perforated Triangle} & \multirow{6}{*}{225} & \multirow{6}{*}{0.00}
        & 1  & 1.0000 & 1.0000 \\
        &    &    & 2  & 0.9620 & 1.0000 \\
        &    &    & 4  & 0.8509 & 1.0000\\
        &    &    & 8  & 0.7135& 1.0000\\
        &    &    & 16 & 0.6404 & 1.0000\\
        &    &    & 32 & 0.5994 & 1.0000 \\
        \hline

    \end{tabular}
    \caption{\textbf{The critical transition density $\rho^*$ of different perforated origami structures under the Most Efficient and Least Efficient selection rules with different number of choices $k$}.}
    \label{tab:critical_density_3}
\end{table*}

\begin{table*}[t]
\centering
\begin{tabular}{|l|r|r|r|r|r|r|}
\hline
\textbf{Structure} & $\mathbf{a}$ & $\mathbf{b}$ & $\mathbf{c}$ & $\mathbf{d}$ & $\mathbf{f}$ & \textbf{RMSE} \\
\hline
Miura-Ori           & 0.353314 & 0.615089 & 0.766633 & 0.000000  & 0.647008  & 0.035586 \\
Huffman Rectangular Weave       & 0.361662 & 0.727425 & 0.475935 & -2.124972 & 1.429219  & 0.047417 \\
Huffman Waterbombs       & 0.423211 & 0.617756 & 1.279683 & -1.167508 & 1.613251  & 0.036420 \\
Lang Oval           & 0.160092 & 0.632491 & 0.212309 & 1.900514  & -0.005942 & 0.051558 \\
Hex/Tri             & 0.354397 & 0.655452 & -0.038231 & 0.048151  & 0.605736  & 0.072726 \\
Lang Honeycomb        & 0.377366 & 0.629517 & 0.382077 & 4.134894  & -2.920914 & 0.104569 \\
Kirigami Honeycomb     & 2.132125 & 0.000368 & 0.115862 & 0.000000  & 0.750164  & 0.006987 \\
Perforated Triangle      & 0.208261 & 0.831877 & 1.741861 & -0.650196 & 1.065508  & 0.025120 \\
Auxetic Triangle         & 0.349825 & 0.656863 & 1.559848 & 1.884478  & -0.740271 & 0.026843 \\
\hline
\end{tabular}

\caption{\textbf{The fitted model parameters and the root mean square error (RMSE) for each type of origami structure considered in this work.}}
\label{tab:fit_params}
\end{table*}

\end{document}